\newtheorem{alg}{Algorithm}
\title{Improved Parameterized Complexity of Happy Set Problems} 
\newcommand{\universityOfUtah}{School of Computing, University of Utah, USA}
\author{Yosuke Mizutani}{\universityOfUtah}{yos@cs.utah.edu}{https://orcid.org/0000-0002-9847-4890}{}
\author{Blair D. Sullivan}{\universityOfUtah}{sullivan@cs.utah.edu}{https://orcid.org/0000-0001-7720-6208}{}
\authorrunning{Y. Mizutani and B.\,D. Sullivan} 
\keywords{%
parameterized algorithms,
maximum happy set,
densest k-subgraph,
modular-width,
clique-width,
neighborhood diversity,
cluster deletion number,
twin cover
}
\begin{document}

\maketitle
\begin{abstract}
    We present fixed-parameter tractable (FPT) algorithms for two problems,
     \PrbMaxHSLong (\PrbMaxHS) and \PrbMaxEHSLong (\PrbMaxEHS)%
     ---also known as \PrbDKSLong.
    Given a graph $G$ and an integer $k$, \PrbMaxHS asks for a set $S$ of
     $k$ vertices such that the number of \textit{happy vertices} with respect to $S$ is maximized,
     where a vertex $v$ is happy if $v$ and all its neighbors are in $S$.
    We show that \PrbMaxHS can be solved in time
     $\mathcal{O}\left(2^\mw \cdot \mw \cdot k^2 \cdot \abs{V(G)}\right)$
     and $\mathcal{O}\left(8^\cw \cdot k^2 \cdot \abs{V(G)}\right)$,
     where $\mw$ and $\cw$ denote the \textit{modular-width} and the \textit{clique-width}
     of $G$, respectively.
    This resolves the open questions posed in \cite{asahiro2021parameterized}.

    The \PrbMaxEHS problem is an edge-variant of \PrbMaxHS,
     where we maximize the number of \textit{happy edges}, the edges whose endpoints are in $S$.
    In this paper we show that \PrbMaxEHS can be solved in time
    $f(\nd)\cdot\abs{V(G)}^{\bigo{1}}$ and $\bigo{2^{\cd}\cdot k^2 \cdot \abs{V(G)}}$,
     where $\nd$ and $\cd$ denote the \textit{neighborhood diversity} and
     the \textit{cluster deletion number} of $G$, respectively,
     and $f$ is some computable function.
    This result implies that \PrbMaxEHS is also fixed-parameter tractable by \textit{twin cover number}.
\end{abstract}

\newpage
\section{Introduction}\label{sec:introduction}

In the study of large-scale networks, \textit{communities}---%
cohesive subgraphs in a network---play an important role in understanding complex systems
and appear in sociology, biology and computer science, etc. \cite{fortunato_community_2010,li_community_2011}.
For example, the concept of homophily in sociology explains the tendency for individuals
 to associate themselves with similar people \cite{mcpherson2001birds}.
Homophily is a fundamental law governing
 the structure of social networks,
 and finding groups of people sharing similar interests
 has many real-world applications \cite{easley2012networks}.

People have attempted to frame this idea as a graph optimization problem,
where a vertex represents a person and an edge corresponds to some relation in the social network.
The notion of \textit{happy} vertices was first introduced by Zhang and Li
 in terms of graph coloring \cite{zhang2015algorithmic},
 where each color represents an attribute of a person (possibly fixed).
A vertex is \textit{happy} if all of its neighbors share its color.
The goal is to maximize the number of happy vertices by changing the color of
 unfixed vertices, thereby achieving the greatest social benefit.

Later, Asahiro et al. introduced \PrbMaxHSLong (\PrbMaxHS)
which defines that a vertex $v$ is \textit{happy} with respect to a \textit{happy set} $S$
if $v$ and all of its neighbors are in $S$ \cite{asahiro2021parameterized}.
The \PrbMaxHS problem asks for a vertex set $S$ of size $k$ that maximizes
 the number of happy vertices.
They also define its edge-variant, \PrbMaxEHSLong (\PrbMaxEHS)
 which maximizes the number of \textit{happy edges},
 an edge with both endpoints in the \textit{happy set}.
It is clear to see that \PrbMaxEHS is equivalent to choosing a vertex set $S$ such that
the number of edges in the induced subgraph on $S$ is maximized.
This problem is known as \PrbDKSLong (\PrbDKS) in other literature.
Both \PrbMaxHS and \PrbMaxEHS are NP-hard \cite{asahiro2021parameterized, feige_densest_1997},
and we study their parameterized complexity throughout this paper.

\subsection{Parameterized Complexity and Related Work}

Graph problems are often studied with a variety of structural parameters
 in addition to natural parameters (size $k$ of the happy set in our case). 
Specifically, we investigate the parameterized complexity with respect to
 \textit{modular-width} (\mw),
 \textit{clique-width} (\cw),
 \textit{neighborhood diversity} (\nd),
 \textit{cluster deletion number} (\cd),
 \textit{twin cover number} (\tc),
 \textit{treewidth} (\tw),
 and \textit{vertex cover number} (\vc),
 all of which we define in Section 2.3.
Figure~\ref{fig:params} illustrates the hierarchy of these parameters by inclusion;
 hardness results are implied along the arrows,
 and FPT\footnote{%
 An FPT (fixed-parameter tractable) algorithm solves the problem in time $f(k) \cdot n^{\bigo{1}}$ for
 some computable function $f$.
 } algorithms are implied in the reverse direction.

Asahiro et al. showed that \PrbMaxHS is W[1]-hard with respect to $k$ by a parameterized reduction
 from the \probname{$q$-Clique} problem \cite{asahiro2021parameterized}.
They also presented FPT algorithms for \PrbMaxHS on parameters:
 clique-width plus $k$, neighborhood diversity, cluster deletion number (which implies FPT by twin cover number),
 and treewidth.

\PrbMaxEHS (\PrbDKS) has been extensively studied in different names
 (e.g. the \probname{$k$-Cluster} problem \cite{corneil_clustering_1984},
  the \probname{Heaviest Unweighted Subgraph} problem \cite{kortsarz_choosing_1993},
  and \probname{$k$-Cardinality Subgraph} problem \cite{bruglieri_annotated_2006}).
As for parameterized complexity, Cai showed the W[1]-hardness parameterized by $k$ \cite{cai2007parameterized}.
Bourgeois employed Moser's technique in \cite{moser2005exact} to show that
\PrbMaxEHS can be solved in time $2^{\tw}\cdot n^{\bigo{1}}$ \cite{bourgeois2013exact}.
Broersma et al. proved that \PrbMaxEHS can be solved in time $k^{\bigo{\cw}}\cdot n$,
but it cannot be solved in time $2^{o(\cw \log k)} \cdot n^{\bigo{1}}$
unless the Exponential Time Hypothesis (ETH) fails \cite{broersma2013tight}.
To the best of our knowledge, the parameterized hardness by modular-width,
 neighborhood diversity, cluster deletion number and twin cover number remained open prior to our work.
Figure~\ref{tab:result} summarizes the known and established hardness results for \PrbMaxHS and \PrbMaxEHS.

\subsection{Our Contributions}

In this paper, we present four novel parameterized algorithms for \PrbMaxHS and \PrbMaxEHS.
First, we give a dynamic-programming algorithm that solves \PrbMaxHS in time\linebreak
$\bigo{2^\mw \cdot \mw \cdot k^2 \cdot |V|}$,
answering the question posed by the authors of \cite{asahiro2021parameterized}.
Second, we show that \PrbMaxHS is FPT by clique-width, giving
an $\bigo{8^\cw \cdot k^2 \cdot |V|}$ algorithm,
which removes the exponential term of $k$ from the best known result,
$\bigo{6^{\cw} \cdot k^{2(\cw + 1)} \cdot |V|}$ \cite{asahiro2021parameterized}.
Turning to \PrbMaxEHS, we prove it is FPT by neighborhood diversity,
using an integer quadratic programming formulation.
Lastly, we provide an FPT algorithm for \PrbMaxEHS parameterized by cluster deletion number
with running time $\bigo{2^{\cd} \cdot k^2 \cdot |V|}$,
which also implies the problem is FPT by twin cover number.
These new results complete the previously-open parameterized complexities in \cref{tab:result}.

\begin{figure}[t]
    \begin{minipage}{0.40\textwidth}
     \centering
     \footnotesize
     \begin{tikzpicture}
        \tikzstyle{Rect}=[rectangle, draw, align=center, thick];

        \node[Rect] (cw) at (0.5,4   ) {clique-width};
        \node[Rect] (mw) at (-2 ,2.8 ) {modular-width};
        \node[Rect] (cd) at (0  ,2.65) {cluster\\deletion};
        \node[Rect] (nd) at (-2 ,1.5 ) {neighborhood\\diversity};
        \node[Rect] (tc) at (0  ,1.5 ) {twin cover};
        \node[Rect] (vc) at (0  ,0   ) {vertex cover};
        \node[Rect] (tw) at (1.4,0.9 ) {treewidth};

        \node (weak)   at (-2.3, 4) {(weak)};
        \node (strong) at (-2.3, 0) {(strong)};
     
        \draw[-{Latex[length=2mm, width=1.5mm]}] (vc)--(nd);
        \draw[-{Latex[length=2mm, width=1.5mm]}] (vc)--(tc);
        \draw[-{Latex[length=2mm, width=1.5mm]}] (vc)--(tw);
        \draw[-{Latex[length=2mm, width=1.5mm]}] (nd)--(mw);
        \draw[-{Latex[length=2mm, width=1.5mm]}] (tc)--(mw);
        \draw[-{Latex[length=2mm, width=1.5mm]}] (tc)--(cd);
        \draw[-{Latex[length=2mm, width=1.5mm]}] (cd)--(cw);
        \draw[-{Latex[length=2mm, width=1.5mm]}] (tw)--(0.9, 3.75);
        \draw[-{Latex[length=2mm, width=1.5mm]}] (mw)--(-0.1, 3.75);

     \end{tikzpicture}

     \caption{%
     Hierarchy of relevant structural graph parameters.
     Arrows indicate generalizations.
     }
     \label{fig:params}
    \end{minipage}%
    \hspace{0.02\textwidth}
    \begin{minipage}{0.56\textwidth}
     \centering
     \footnotesize
     \setlength{\tabcolsep}{0.6em}
     \begin{tabular}{|l|c|c|}
         \hline
         \rule{0pt}{10pt}
         \textbf{Parameter} & \PrbMaxHS & \PrbMaxEHS \\
         \hline\hline\rule{0pt}{9pt}
         Size $k$ of happy set    & W[1]-hard\cite{asahiro2021parameterized} & W[1]-hard\cite{cai2007parameterized} \\
         \hline\rule{0pt}{9pt}
         Clique-width + $k$           & FPT\cite{asahiro2021parameterized} & FPT\cite{broersma2013tight} \\
         \hline\hline\rule{0pt}{9pt}
         Clique-width                 & \textcolor{red}{FPT} & W[1]-hard\cite{broersma2013tight} \\
         \hline\rule{0pt}{9pt}
         Modular-width                & \textcolor{red}{FPT} & \textcolor{blue}{Open} \\
         \hline\rule{0pt}{9pt}
         \scriptsize{Neighborhood diversity}       & FPT\cite{asahiro2021parameterized} & \textcolor{red}{FPT} \\
         \hline\rule{0pt}{9pt}
         \scriptsize{Cluster deletion number}      & FPT\cite{asahiro2021parameterized} & \textcolor{red}{FPT} \\
         \hline\rule{0pt}{9pt}
         Twin cover number            & FPT\cite{asahiro2021parameterized} & \textcolor{red}{FPT} \\
         \hline\rule{0pt}{9pt}
         Treewidth                    & FPT\cite{asahiro2021parameterized} & FPT\cite{bourgeois2013exact}\\
         \hline\rule{0pt}{9pt}
         Vertex cover number          & FPT\cite{asahiro2021parameterized} & FPT\cite{bourgeois2013exact}\\
         \hline
     \end{tabular}
     \caption{Known and established hardness results under select parameters
     for \PrbMaxHS and \PrbMaxEHS (as known as \PrbDKSLong).
     New results from this paper in red.}
     \label{tab:result}
    \end{minipage}
\end{figure}


\section{Preliminaries}\label{sec:preliminaries}

We use standard graph theory notation, following \cite{diestel2018graph}.
Given a graph $G=(V,E)$,
we write $n(G)=\abs{V}$ for the number of vertices and
$m(G)=\abs{E}$ for the number of edges.
We use $N(v)$ and $N[v]$ to denote the open and closed
neighborhoods of a vertex $v$, respectively, and
for a vertex set $X \subseteq V$, $N[X]$ denotes the union of $N[x]$ for all $x \in X$.
We write $\deg_G(v)=\deg(v)$ for the degree of a vertex $v$.
We denote the induced subgraph of $G$ on a set $X \subseteq V$ by $G[X]$.
%
%
We say vertices $u$ and $v$ are \textit{twins} if they have the same neighbors,
 i.e. $N(u)\setminus \{v\} = N(v) \setminus \{u\}$.
Further, they are called \textit{true twins} if $uv \in E$.

\subsection{Problem Definitions}

Asahiro et al. first introduced the \PrbMaxHSLong problem in \cite{asahiro2021parameterized}.

\begin{problem}{\PrbMaxHSLong (\PrbMaxHS)}
\Input & A graph $G = (V, E)$ and a positive integer $k$. \\
\Prob & Find a subset $S \subseteq V$ of $k$ vertices that maximizes
 the number of \textit{happy} vertices $v$ with $N[v] \subseteq S$.
\end{problem}

\begin{figure*}[t]
    \centering

    \pgfdeclarelayer{bg}
    \pgfsetlayers{bg, main}
    \usetikzlibrary{calc}

    \tikzset{
        old inner xsep/.estore in=\oldinnerxsep,
        old inner ysep/.estore in=\oldinnerysep,
        double circle/.style 2 args={
            circle,
            old inner xsep=\pgfkeysvalueof{/pgf/inner xsep},
            old inner ysep=\pgfkeysvalueof{/pgf/inner ysep},
            /pgf/inner xsep=\oldinnerxsep+#1,
            /pgf/inner ysep=\oldinnerysep+#1,
            alias=sourcenode,
            append after command={
            let     \p1 = (sourcenode.center),
                    \p2 = (sourcenode.east),
                    \n1 = {\x2-\x1-#1-0.5*\pgflinewidth}
            in
                node [inner sep=0pt, draw, circle, minimum width=2*\n1,at=(\p1),#2] {}
            }
        },
        double circle/.default={-3pt}{black}
    }
    \tikzstyle{plain} = [circle, fill=white, text=black, draw, thick, scale=1, minimum size=0.5cm, inner sep=1.5pt]
    \tikzstyle{happy} = [double circle={-3.5pt}{line width=1.2pt}, fill=black!25, text=black, draw, thick, scale=1, minimum size=0.5cm, inner sep=1.5pt]
    \tikzstyle{unhappy} = [circle, fill=black!25, text=black, draw, thick, scale=1, minimum size=0.5cm, inner sep=1.5pt]

    \begin{minipage}[m]{.48\textwidth}
        \vspace{0pt}
        \centering
        \begin{tikzpicture}
            \node[plain] (a) at (0, 1) {$a$};
            \node[plain] (b) at (0, 0) {$b$};
            \node[unhappy] (c) at (1.3, 1) {$c$};
            \node[unhappy] (d) at (1.3, 0) {$d$};
            \node[happy] (e) at (2.6, 0.5) {$e$};
            \node[unhappy] (f) at (3.9, 1) {$f$};
            \node[unhappy] (g) at (3.9, 0) {$g$};
            \node[plain] (h) at (4.8, 0.5) {$h$};

            \draw (a) -- (b);
            \draw (a) -- (c);
            \draw (a) -- (d);
            \draw (b) -- (c);
            \draw (b) -- (d);
            \draw (c) -- (d);
            \draw (c) -- (e);
            \draw (d) -- (e);
            \draw (e) -- (f);
            \draw (e) -- (g);
            \draw (f) -- (h);
            \draw (g) -- (h);

            \draw[rounded corners, dashed] (0.9, -0.5) rectangle ++ (3.4, 2);
            \node at (3.2, 1.2) {$S$};
        \end{tikzpicture}
    \end{minipage}
    \begin{minipage}[m]{.48\textwidth}
        \vspace{0pt}
        \centering
        \begin{tikzpicture}
            \node[happy] (a) at (0, 1) {$a$};
            \node[happy] (b) at (0, 0) {$b$};
            \node[happy] (c) at (1.3, 1) {$c$};
            \node[happy] (d) at (1.3, 0) {$d$};
            \node[unhappy] (e) at (2.6, 0.5) {$e$};
            \node[plain] (f) at (3.9, 1) {$f$};
            \node[plain] (g) at (3.9, 0) {$g$};
            \node[plain] (h) at (4.8, 0.5) {$h$};

            \draw (a) -- (b);
            \draw (a) -- (c);
            \draw (a) -- (d);
            \draw (b) -- (c);
            \draw (b) -- (d);
            \draw (c) -- (d);
            \draw (c) -- (e);
            \draw (d) -- (e);
            \draw (e) -- (f);
            \draw (e) -- (g);
            \draw (f) -- (h);
            \draw (g) -- (h);

            \draw[rounded corners, dashed] (-0.5, -0.5) rectangle ++ (3.5, 2);
            \node at (2.2, 1.2) {$S$};
        \end{tikzpicture}
    \end{minipage}
    \caption{%
    Given a graph above, if $k=5$,
    choosing $S=\{c, d, e, f, g\}$ makes only one vertex ($e$) happy (left).
    On the other hand, $S=\{a, b, c, d, e\}$ is an optimal solution,
    making four vertices ($a, b, c, d$) happy (right).
    }
    \label{fig:hs-ex}
\end{figure*}

Figure~\ref{fig:hs-ex} illustrates an example instance with $k=5$.
Let us call a vertex that is not happy an \textit{unhappy} vertex,
 and observe that the set of unhappy vertices is given by $N[V \setminus S]$,
 providing an alternative characterization of happy vertices.

\begin{proposition}\label{prop:unhappy}
    Given a graph $G=(V,E)$ and a happy set $S \subseteq V$,
    the set of happy vertices is given by $V \setminus N[V \setminus S]$.
\end{proposition}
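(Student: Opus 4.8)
The plan is to establish the contrapositive characterization: rather than describing the happy vertices directly, I would first show that the set of \emph{unhappy} vertices coincides exactly with $N[V \setminus S]$, and then obtain the statement by taking complements inside $V$. This reformulation is convenient because "happy" is defined by a universally quantified condition ($N[v] \subseteq S$), whose negation is an existential statement that matches the union defining $N[V \setminus S]$.

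Concretely, I would unfold the definitions. A vertex $v$ is happy precisely when $N[v] \subseteq S$; hence $v$ is unhappy precisely when $N[v] \not\subseteq S$, i.e. when there exists some $u \in N[v]$ with $u \notin S$, equivalently $u \in V \setminus S$. The single step that does the real work is the symmetry of closed-neighborhood membership: for any two vertices, $u \in N[v]$ if and only if $v \in N[u]$ (this covers both the case $u = v$ and the case where $uv \in E$). Using this, the condition ``there exists $u \in V \setminus S$ with $u \in N[v]$'' is rewritten as ``there exists $u \in V \setminus S$ with $v \in N[u]$'', which is exactly the statement that $v \in \bigcup_{u \in V \setminus S} N[u] = N[V \setminus S]$.

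Chaining these equivalences yields $v$ unhappy $\iff v \in N[V \setminus S]$, so the unhappy set is $N[V \setminus S]$ and the happy set is its complement $V \setminus N[V \setminus S]$, as claimed. There is no substantive obstacle here; the only points requiring care are to retain the \emph{closed} neighborhood throughout (so that a vertex lying outside $S$ counts as unhappy itself, not merely its neighbors) and to invoke the symmetry of $N[\cdot]$ explicitly rather than leaving it implicit.
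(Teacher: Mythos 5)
Your proof is correct and matches the paper's reasoning: the paper states this as an immediate observation (that the unhappy vertices are exactly $N[V \setminus S]$) without a written proof, and your argument---negating the condition $N[v] \subseteq S$ and using the symmetry $u \in N[v] \iff v \in N[u]$ before complementing---is precisely the intended justification.
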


In addition, Asahiro et al. define an edge variant \cite{asahiro2021parameterized}:

\begin{problem}{\PrbMaxEHSLong (\PrbMaxEHS)}
\Input & A graph $G=(V,E)$ and a positive integer $k$.\\
\Prob & Find a set $S \subseteq V$ of $k$ vertices that maximizes the number of happy edges.
        An edge $uv \in E$ is \textit{happy} if and only if $\{u,v\} \subseteq S$.
\end{problem}

It is known that \PrbMaxEHS is identical to the \PrbDKSLong problem (\PrbDKS),
as the number of happy edges is equal to $m(G[S])$.
Some literature (e.g. \cite{fomin_algorithmic_2010})
also phrases this problem as the dual of the \PrbSKSLong problem.

\subsection{Structural Parameters}

We now define the structural graph parameters considered in this paper.

\br
\noindent\textbf{Treewidth.}
The most-studied structural parameter is treewidth,
 introduced by Robertson \& Seymour in \cite{robertson_graph_1986}.
Treewidth measures how a graph resembles a tree and
 admits FPT algorithms for a number of NP-hard problems, such as
 \textsc{Weighted Independent Set}, \textsc{Dominating Set}, and \textsc{Steiner Tree} \cite{cygan2015parameterized}.
Treewidth is defined by the following notion of tree decomposition.
\begin{definition}[treewidth \cite{cygan2015parameterized}]
    A tree decomposition of a graph $G$ is a pair \\$(T, \{X_t\}_{t \in V(T)})$,
    where $T$ is a tree and $X_t \subseteq V(G)$ is an assigned vertex set for every node $t$,
    such that the following three conditions hold:
    \vspace*{-0.5em}
    \begin{itemize}
        \item $\bigcup_{t \in V(T)} X_t = V(G)$.
        \item For every $uv \in E(G)$, there exists a node $t$ such that $u,v \in X_t$.
        \item For every $u \in V(G)$, the set $T_u = \{t \in V(T) : u \in X_t \}$
        induces a connected subtree of $T$.
    \end{itemize}

    The width of tree decomposition is defined to be $\max_{t\in V(T)}\abs{X_t}-1$,
    and the \textbf{treewidth} of a graph $G$, denoted by $\tw$,
     is the minimum possible width of a tree decomposition of $G$.
\end{definition}

\noindent\textbf{Clique-width.}
Clique-width is a generalization of treewidth and can capture 
 dense, but structured graphs.
Intuitively, a graph with bounded clique-width $k$ can be built from
 single vertices by joining structured parts,
 where vertices are associated by at most $k$ labels such that
 those with the same label are indistinguishable in later steps.

\begin{definition}[clique-width \cite{courcelle2000upper}]
  For a positive integer $w$, a $w$-labeled graph is a graph whose vertices are
   labeled by integers in $\{1,\ldots,w\}$.
  The \textbf{clique-width} of a graph $G$, denoted by $\cw$, is the minimum $w$ such that
  $G$ can be constructed by repeated application of the following operations:

  \vspace*{-0.5em}
  \begin{itemize}
    \item (O1) Introduce $i(v)$: add a new vertex $v$ with label $i \in \{1,\ldots,w\}$.
    \item (O2) Union $G_1 \oplus G_2$: take a disjoint union of $w$-labeled graphs $G_1$ and $G_2$.
    \item (O3) Join $\eta(i,j)$: take two labels $i$ and $j$, and then add an edge between
                every pair of vertices labeled by $i$ and by $j$.
    \item (O4) Relabel $\rho(i,j)$: relabel the vertices of label $i$ to label $j \in \{1,\ldots,w\}$.
  \end{itemize}
\end{definition}

This construction naturally defines a rooted binary tree, called a $\cw$-expression tree $G$,
 where $G$ is the root and each node corresponds to
 one of the above operations.

\br
\noindent\textbf{Neighborhood Diversity.}
Neighborhood diversity is a parameter introduced by Lampis \cite{lampis_algorithmic_2012}, which measures the number of twin classes.

\begin{definition}[neighborhood diversity \cite{lampis_algorithmic_2012}]
    The \textbf{neighborhood diversity} of a graph $G=(V,E)$, denoted by $\nd$,
     is the minimum number $w$ such that $V$ can be partitioned into $w$ sets of twin vertices.
\end{definition}

By definition, each set of twins, called a \textit{module}, is either a clique or an independent set.

\br
\noindent\textbf{Cluster Deletion Number.}
Cluster (vertex) deletion number is the distance to a cluster graph, which consists of disjoint cliques.

\begin{definition}[cluster deletion number]
    A vertex set $X$ is called a cluster deletion set if $G[V \setminus X]$ is a cluster graph.
    The \textbf{cluster deletion number} of $G$, denoted by $\cd$, is the size of the minimum cluster deletion set in $G$.
\end{definition}

\br
\noindent\textbf{Twin Cover Number.}
The notion of twin cover is introduced by Ganian \cite{ganian_improving_2015} and offers a generalization
 of vertex cover number.

\begin{definition}[twin cover number \cite{ganian_improving_2015}]
    A vertex set $X \subseteq V$ is a twin cover of $G=(V,E)$
     if for every edge $uv \in E$ either
     (1) $u \in X$ or $v \in X$, or
     (2) $u$ and $v$ are true twins.
    The \textbf{twin cover number}, denoted by $\tc$, is the size of the minimum twin cover of $G$.
\end{definition}

\br
\noindent\textbf{Modular-width.}
Modular-width is a parameter introduced by Gajarsk{\'y} et al. \cite{modular-width-gajarsky}
 to generalize simpler notions on dense graphs
 while avoiding the negative results brought by moving to the full generality of clique-width
 (e.g. many problems FPT for treewidth becomes W[1]-hard for clique-width \cite{fomin_clique-width_2009,fomin_algorithmic_2010,fomin_intractability_2010}).
Modular-width is defined using the standard concept of modular decomposition.

\begin{definition}[modular-width \cite{modular-width-gajarsky}]
    Any graph can be produced via a sequence of the following operations:

    \vspace*{-0.5em}
    \begin{itemize}
      \item (O1) Introduce: Create an isolated vertex.
      \item (O2) Union $G_1 \oplus G_2$: Create the disjoint union of two graphs $G_1$ and $G_2$.
      \item (O3) Join: Given two graphs $G_1$ and $G_2$, create the complete join $G_3$ of  $G_1$ and $G_2$.
                That is, a graph $G_3$ with vertices $V(G_1) \cup V(G_2)$ and edges $E(G_1) \cup E(G_2) \cup \{(v, w) : v \in G_1, w \in G_2\}$.
      \item (O4) Substitute: Given a graph $G$ with vertices $v_1,\ldots,v_n$ and given graphs $G_1,\ldots,G_n$, create the \textit{substitution}
                of $G_1,\ldots,G_n$ in $G$. The substitution is a graph $\mathcal{G}$ with vertex set $\bigcup_{1\leq i \leq n} V(G_i)$
                and edge set $\bigcup_{1\leq i \leq n}{E(G_i)} \cup \{(v, w) : v \in G_i, w \in G_j, (v_i, v_j) \in E(G)\}$.
                Each graph $G_i$ is substituted for a vertex $v_i$,
                 and all edges between graphs corresponding to adjacent vertices in $G$ are added.
    \end{itemize}
    
    These operations, taken together in order to construct a graph, 
    form a \textit{parse-tree} of the graph.
    The width of a graph is the maximum size of the vertex set of $G$
     used in operation (O4) to construct the graph.
     The \textbf{modular-width}, denoted by $\mw$, is the minimum width such that
     $G$ can be obtained from some sequence of operations (O1)-(O4).
\end{definition}

Finding a parse-tree of a given graph, called a \textit{modular decomposition}, can be done in linear-time \cite{tedder_simple_2007}.
See Figure~\ref{fig:mw-ex} for an illustration of modular decomposition. 
Gajarský et al. also give FPT algorithms parameterized by modular-width for 
 \textsc{Partition into paths, Hamiltonian path, Hamiltonian cycle} and \textsc{Coloring},
 using bottom-up dynamic programming along the parse-tree.

\begin{figure*}[t]
    \centering

    \pgfdeclarelayer{bg}
    \pgfsetlayers{bg, main}
    \usetikzlibrary{calc}

    \tikzset{
        old inner xsep/.estore in=\oldinnerxsep,
        old inner ysep/.estore in=\oldinnerysep,
        double circle/.style 2 args={
            circle,
            old inner xsep=\pgfkeysvalueof{/pgf/inner xsep},
            old inner ysep=\pgfkeysvalueof{/pgf/inner ysep},
            /pgf/inner xsep=\oldinnerxsep+#1,
            /pgf/inner ysep=\oldinnerysep+#1,
            alias=sourcenode,
            append after command={
            let     \p1 = (sourcenode.center),
                    \p2 = (sourcenode.east),
                    \n1 = {\x2-\x1-#1-0.5*\pgflinewidth}
            in
                node [inner sep=0pt, draw, circle, minimum width=2*\n1,at=(\p1),#2] {}
            }
        },
        double circle/.default={-3pt}{black}
    }
    \tikzstyle{plain} = [circle, fill=white, text=black, draw, thick, scale=1, minimum size=0.5cm, inner sep=1.5pt]
    \tikzstyle{small} = [circle, fill=white, text=black, draw, thick, scale=1, minimum size=0.2cm, inner sep=1.5pt]
    \tikzstyle{large} = [circle, fill=black!25, text=black, draw, thick, scale=1, minimum size=1.0cm, inner sep=1.5pt]

    \begin{minipage}[m]{.32\textwidth}
        \vspace{0pt}
        \centering
        \begin{tikzpicture}
            \node[plain] (a) at (0, 1) {$a$};
            \node[plain] (b) at (1, 1) {$b$};
            \node[plain] (c) at (1.8, 2) {$c$};
            \node[plain] (d) at (1.8, 0) {$d$};
            \node[plain] (e) at (2.8, 1) {$e$};
            \node[plain] (f) at (3.6, 2) {$f$};
            \node[plain] (g) at (3.6, 0) {$g$};

            \draw (a) -- (b);
            \draw (a) -- (c);
            \draw (a) -- (d);
            \draw (b) -- (e);
            \draw (c) -- (d);
            \draw (c) -- (e);
            \draw (d) -- (e);
            \draw (e) -- (f);
            \draw (e) -- (g);

            \draw[rounded corners, dashed, blue] (0.6, -0.5) rectangle ++ (1.8, 3);
            \draw[rounded corners, dashed, black!60!green] (1.4, -0.4) rectangle ++ (0.8, 2.8);
            \draw[rounded corners, dashed, black!20!orange] (3.2, -0.5) rectangle ++ (0.8, 3);
        \end{tikzpicture}
    \end{minipage}
    \begin{minipage}[m]{.66\textwidth}
        \vspace{0pt}
        \centering
        \begin{tikzpicture}
            \node[plain] (a0) at (0, 1.5) {$a$};
            \node[plain] (b0) at (1, 1.5) {$b$};
            \node[plain] (c0) at (2, 1.5) {$c$};
            \node[plain] (d0) at (3, 1.5) {$d$};
            \node[plain] (e0) at (4, 1.5) {$e$};
            \node[plain] (f0) at (5, 1.5) {$f$};
            \node[plain] (g0) at (6, 1.5) {$g$};

            \node[small] (c1) at (2.5, 2.8) {};
            \node[small] (d1) at (2.5, 2.4) {};

            \node[small] (b2) at (1.4, 4.2) {};
            \node[large] (cy) at (2.4, 4.2) {};
            \node[small] (c2) at (2.4, 4.4) {};
            \node[small] (d2) at (2.4, 4.0) {};
            \node[small] (f2) at (5.5, 4.4) {};
            \node[small] (g2) at (5.5, 4.0) {};

            \node[large] (ax) at (0.5,  6.0) {};
            \node[small] (a3) at (0.5,  6.0) {};
            \node[large] (bx) at (2.05, 6.0) {};
            \node[small] (b3) at (1.8,  6.0) {};
            \node[small] (c3) at (2.2,  6.2) {};
            \node[small] (d3) at (2.2,  5.8) {};
            \node[large] (ex) at (4.0,  6.0) {};
            \node[small] (e3) at (4.0,  6.0) {};
            \node[large] (fx) at (5.5,  6.0) {};
            \node[small] (f3) at (5.5,  6.2) {};
            \node[small] (g3) at (5.5,  5.8) {};

            \draw (c1) -- (d1);
            \draw (c2) -- (d2);
            \draw (c3) -- (d3);
            \draw[very thick] (ax) -- (bx);
            \draw[very thick] (bx) -- (ex);
            \draw[very thick] (ex) -- (fx);
            \draw (a0) -- (0, 5.3);
            \draw (e0) -- (4, 5.3);
            \draw (f0) -- (5, 3.5);
            \draw (g0) -- (6, 3.5);
            \draw (5.5, 4.9) -- (5.5, 5.3);
            \draw (2.05, 4.9) -- (2.05, 5.3);
            \draw (2.5, 3.1) -- (2.5, 3.5);
            \draw (b0) -- (1, 3.5);
            \draw (c0) -- (2, 2.1);
            \draw (d0) -- (3, 2.1);

            \draw[rounded corners, dashed] (-0.2, 5.3) rectangle ++ (6.4, 1.4);
            \draw[rounded corners, dashed, blue] (0.8, 3.5) rectangle ++ (2.4, 1.4);
            \draw[rounded corners, dashed, black!20!orange] (4.8, 3.5) rectangle ++ (1.4, 1.4);
            \draw[rounded corners, dashed, black!60!green] (1.8, 2.1) rectangle ++ (1.4, 1.0);
            \node at (6.7, 6.4) {(O4)};
            \node at (6.7, 4.6) {(O2)};
            \node at (3.6, 4.6) {(O2)};
            \node at (3.6, 2.8) {(O3)};
            \node at (6.7, 1.8) {(O1)};
        \end{tikzpicture}
    \end{minipage}
    \caption{%
    An example graph $G$ (left) with modular-width $4$.
    Modular decomposition of the same graph is shown at right.
    The parse-tree has $G$ as the root, and its nodes correspond to operations (O1)-(O4).
    Notice that each node also represents a \textit{module}%
     ---module members have the same neighbors outside the module.
    }
    \label{fig:mw-ex}
\end{figure*}

\br
\noindent\textbf{Vertex Cover Number.}
The vertex cover number is the solution size of the classic \textsc{Vertex Cover} problem.

\begin{definition}
    A vertex set $X \subseteq V$ is a vertex cover of $G=(V,E)$ if for every edge $uv \in E$ either $u \in X$ or $v \in X$.
    The \textbf{vertex cover number} of $G$, denoted by $\vc$, is the size of the minimum vertex cover of $G$.
\end{definition}

\subsubsection{Properties of Structural Parameters}

Finally, we note the relationship among the parameters defined above,
 which establishes the hierarchy shown in Figure~\ref{fig:params}.

\begin{proposition}[\cite{asahiro2021parameterized, modular-width-gajarsky}]
    Let $\cw, \tw, \cd, \nd, \tc, \vc, \mw$ be
     the clique-width,
     tree-width,
     cluster deletion number,
     neighborhood diversity,
     twin cover number,
     vertex cover number,
     and modular-width
     of a graph $G$, respectively.
    Then the following inequalities hold\footnote{%
    $\cw \leq 2$ when $\mw = 0$; otherwise, $\cw \leq \mw + 1$.
}:
     (i) $\cw \leq 2^{\tw + 1} + 1$;
     (ii) $\tw \leq \vc$;
     (iii) $\nd \leq 2^{\vc} + \vc$;
     (iv) $\cw \leq 2^{\cd + 3} - 1$;
     (v) $\cd \leq \tc \leq \vc$;
     (vi) $\mw \leq \nd$;
     (vii) $\mw \leq 2^{\tc} + \tc$;
     and (viii) $\cw \leq \mw + 2$.
\end{proposition}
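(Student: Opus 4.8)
The plan is to establish the eight inequalities largely independently, grouping them by the kind of construction each requires, and to separate those I would simply cite from those I would argue directly. Inequality (i), $\cw \leq 2^{\tw+1}+1$, is the classical bound of Corneil and Rotics: one turns a width-$\tw$ tree decomposition into a clique-width expression whose labels record the at most $2^{\tw+1}$ subsets of the current bag, and I would invoke it rather than reprove it. The remaining bounds in the ``easy'' tier---(ii), (iii), and the chain (v)---follow from short counting arguments I would give in full.

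For those direct arguments: for (ii) $\tw \leq \vc$, given a minimum vertex cover $X$ I would take the (path) decomposition whose bags are $X \cup \{v\}$ over all $v \in V \setminus X$; every edge has an endpoint in $X$ and is therefore covered, and each bag has size at most $\vc+1$. For (iii) $\nd \leq 2^{\vc}+\vc$, every vertex outside a minimum vertex cover $X$ has its neighborhood inside $X$, so two such vertices inducing the same subset of $X$ are twins; there are at most $2^{\vc}$ such subsets, and adding the $\vc$ cover vertices as singleton twin classes gives the bound. For (v), a vertex cover trivially satisfies the twin-cover condition, so $\tc \leq \vc$; and for $\cd \leq \tc$ I would use the observation of Ganian~\cite{ganian_improving_2015} that deleting a twin cover $X$ leaves a graph in which every edge joins true twins, which forces $G[V\setminus X]$ to be a disjoint union of cliques, i.e.\ $X$ is also a cluster deletion set.

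The modular-decomposition bounds (vi), (vii), and (viii) I would prove via the parse-tree. For (vi) $\mw \leq \nd$, a neighborhood-diversity partition into $w=\nd$ modules is exactly a quotient on $w$ vertices into which I substitute a clique or an independent set for each module; a single application of (O4) with a template of size $w$ yields the claim. For (vii) $\mw \leq 2^{\tc}+\tc$, given a twin cover $X$ with $|X|=t$ I would group $V \setminus X$ by neighborhood $A \subseteq X$; since $G[V\setminus X]$ is a cluster graph whose cliques consist of true twins, each clique lies entirely in one group, and every vertex of a group has external neighborhood exactly $A$, so each group is a module with internal cluster-graph structure (needing no (O4)). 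The quotient then has the $t$ cover vertices together with at most $2^{t}$ groups, so $\mw \leq 2^{t}+t$. For (viii) $\cw \leq \mw+2$, following Gajarský et al.~\cite{modular-width-gajarsky} I would process the modular decomposition bottom-up: at a substitute node with template $H$ on $p \leq \mw$ vertices, give each already-built substituted graph a private label, realize the edges of $H$ by join operations, and relabel, which needs only $\mw$ labels plus a constant number of auxiliary ones.

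I expect the main obstacle to be (iv), $\cw \leq 2^{\cd+3}-1$, since bounding clique-width by cluster deletion number cannot be done with a single quotient. The cluster graph $G[V \setminus X]$ already has clique-width $2$, but reattaching the $d=\cd$ deleted vertices with arbitrary adjacencies forces me to distinguish vertices simultaneously by their \emph{type} (their neighborhood among the $2^{d}$ subsets of $X$) and by the clique structure they must retain. Interleaving the clique-width construction of each cluster with the join operations that install the edges into $X$, while keeping the label count below $8\cdot 2^{d}=2^{d+3}$, is the delicate part; I would either reproduce such a layered construction explicitly or invoke it from the cited sources.
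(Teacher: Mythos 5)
The paper offers no proof of this proposition at all---it is stated as known background and attributed entirely to \cite{asahiro2021parameterized, modular-width-gajarsky}---so your proposal is strictly more detailed than anything in the source. The direct arguments you give are sound: the path decomposition with bags $X \cup \{v\}$ for (ii); the $2^{\vc}$ neighborhood classes outside a cover plus $\vc$ singletons for (iii); the observations that a vertex cover is a twin cover and that deleting a twin cover leaves only edges between true twins, whose components are then forced to be cliques, for (v); the single (O4) substitution into the quotient for (vi); and the grouping of $V \setminus X$ by neighborhood in the twin cover for (vii), where you correctly note that each clique of $G[V\setminus X]$ falls into one group (its vertices are true twins, hence share their neighborhood in $X$) and that each group is a module whose internal cluster-graph structure needs no further (O4). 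For (i) and (viii) you defer to Corneil--Rotics and Gajarsk\'y et al.\ respectively, which is exactly what the paper does. The only item you do not actually establish is (iv), $\cw \leq 2^{\cd+3}-1$, where you sketch the difficulty (labels must simultaneously encode the $2^{\cd}$ neighborhood types in the modulator and the residual clique structure) but ultimately fall back on the citation; since the paper does the same, this is not a gap relative to the source, though if you did want a self-contained proof that is the one inequality still owed, and your back-of-envelope count $8 \cdot 2^{\cd}$ would need to be checked against the stated constant $2^{\cd+3}-1$.
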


\section{Background}

Before describing our algorithms, we introduce some building blocks for our argument.

\subsection{Entire Subgraphs}

Structural parameters such as modular-width and clique-width entail
the join operation in their underlying construction trees.
When joining two subgraphs in \PrbMaxHS,
it is important to distinguish whether all the vertices in the subgraph
are included in the happy set.
Formally, we introduce the notion of \textit{entire subgraphs}.

\begin{definition}
    Given a graph $G$ and a happy set $S$,
    an \textbf{entire subgraph} is a subgraph $G'$ of $G$ such that $V(G') \subseteq S$.
\end{definition}

By definition, the empty subgraph is always entire.
The following lemma is directly derived from the definition of happy vertices.

\begin{lemma}\label{lem:entire-unhappy}
    Let $G$ be a complete join of subgraphs $G_1$ and $G_2$.
    $V(G_1)$ admits a happy vertex only if $G_2$ is entire under a happy set $S \subseteq V(G)$.
\end{lemma}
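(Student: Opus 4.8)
The plan is to argue directly from the definition of a happy vertex together with the defining property of the complete join. Recall that in the complete join $G$ of $G_1$ and $G_2$, every vertex of $G_1$ is adjacent to \emph{every} vertex of $G_2$. So the whole argument hinges on a single containment: the closed neighborhood of any vertex of $G_1$ must swallow all of $V(G_2)$.

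Concretely, I would assume the hypothesis of the ``only if'' direction, namely that $V(G_1)$ admits a happy vertex, and fix such a vertex $v \in V(G_1)$. By the definition of happiness (a vertex $v$ is happy with respect to $S$ precisely when $N[v] \subseteq S$), we immediately have $N[v] \subseteq S$. The next step is to observe that because $G$ is the complete join of $G_1$ and $G_2$, the edge set of $G$ contains $vw$ for every $w \in V(G_2)$; hence $V(G_2) \subseteq N(v) \subseteq N[v]$. Chaining the two containments gives $V(G_2) \subseteq S$, which is exactly the statement that $G_2$ is entire under $S$. This completes the implication.

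I do not expect any genuine obstacle here, since the claim is a one-line consequence of unfolding the two relevant definitions; the only thing worth stating carefully is the join property $V(G_2) \subseteq N(v)$, as that is where the hypothesis ``$G$ is a complete join'' is actually used. If anything, the delicate point is purely notational: making sure the ``only if'' in the statement is read as ``if there is a happy vertex in $V(G_1)$, then $G_2$ is entire'', rather than the reverse, so that the direction of the proof matches the phrasing. By symmetry of the roles of $G_1$ and $G_2$, the analogous statement with the indices swapped holds as well, which I would note in passing since both directions are used when the lemma is applied during the join step of the dynamic program.
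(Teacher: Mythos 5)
Your proof is correct and uses the same key observation as the paper, namely that the complete join forces $V(G_2)\subseteq N(v)$ for any $v\in V(G_1)$, so $N[v]\subseteq S$ immediately yields $V(G_2)\subseteq S$. The paper merely phrases the identical argument in contrapositive form (a vertex of $G_2$ outside $S$ puts all of $V(G_1)$ into the unhappy set $N[V(G)\setminus S]$ via Proposition~\ref{prop:unhappy}), so there is no substantive difference.
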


\begin{proof}
    If $G_2$ is not entire, there must exist $v \in V(G_2)$ such that $v \notin S$.
    Recall Proposition~\ref{prop:unhappy}, and we have
    $N[V(G) \setminus S] \supseteq N(v) \supseteq V(G_1)$,
    which implies that any vertex in $V(G_1)$ cannot be happy.
\end{proof}

\subsection{Knapsack Variant with Non-linear Values}

The classic \textsc{Knapsack} problem has a number of variants,
 including \textsc{0-1 Knapsack} \cite{garey_1979_computers} and \textsc{Quadratic Knapsack} \cite{gallo_quadratic_1980}.
In this paper we consider another variant,
 where the objective function is the sum of non-linear functions,
 but the function range is limited to integers.
Specifically,
 each item has unit weight,
 but its value may vary depending on the number of copies of each type of item.
We also require the weight sum to be exact
 and call this problem \PrbFKnapsack, where $f$ stands for function.

\begin{problem}{\PrbFKnapsack}
    \Input & Given a set of $n$ items numbered from 1 to $n$,
             a weight capacity $W \in \Z_0^+$
             and 
             a value function $f_i: D_i \to \Z_0^+$,
             defined on a non-negative integral domain $D_i$ for each item $i$.\\
    \Prob & For every $1 \leq i \leq n$, find the number $x_i \in D_i$ of instances of item $i$ to include in the knapsack,
    maximizing $\sum_{i=1}^n f_i(x_i)$, subject to $\sum_{i=1}^n x_i = W$.
\end{problem}

We show that this problem is solvable in polynomial-time.

\begin{lemma}\label{lem:f-knapsack}
    \PrbFKnapsack can be solved in time $\bigo{nW^2}$.
\end{lemma}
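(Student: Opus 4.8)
The plan is to solve \PrbFKnapsack by a straightforward dynamic program over the items, tracking the exact weight consumed so far. I define a table $T[i][w]$ for $0 \le i \le n$ and $0 \le w \le W$, where $T[i][w]$ is the maximum value $\sum_{j=1}^i f_j(x_j)$ attainable by choosing $x_j \in D_j$ for $j \le i$ subject to $\sum_{j=1}^i x_j = w$; if no such choice exists I set $T[i][w] = -\infty$ to mark infeasibility. The base case is $T[0][0] = 0$ and $T[0][w] = -\infty$ for $w \ge 1$, and the desired answer is $T[n][W]$.

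For the recurrence, I would decide how many copies $x$ of item $i$ to include. Since every copy has unit weight and the prefix total must equal $w$, this gives
\[
T[i][w] = \max_{\substack{x \in D_i \\ 0 \le x \le w}} \bigl( T[i-1][w-x] + f_i(x) \bigr),
\]
with the convention that a maximum over the empty set is $-\infty$. Correctness follows by induction on $i$: any feasible assignment for the first $i$ items is obtained by fixing $x_i = x$ and extending a feasible assignment for the first $i-1$ items to weight $w - x$, and conversely every such combination is feasible and respects the domains. Note that values $x \in D_i$ with $x > W$ are irrelevant, since $\sum_j x_j = W$ forces each $x_j \le W$; restricting attention to $D_i \cap \{0,\ldots,W\}$ therefore loses nothing.

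For the running time, the table has $(n+1)(W+1) = \bigo{nW}$ entries, and computing each entry requires a maximum over at most $w+1 \le W+1$ candidate values of $x$, each evaluated in constant time once the $f_i$ are available as lookups. This yields a total running time of $\bigo{nW \cdot W} = \bigo{nW^2}$, as claimed.

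The only point requiring care---rather than a genuine obstacle---is the non-linearity of the $f_i$: because each $f_i$ may be an arbitrary integer-valued function on $D_i$, no greedy or convexity-based shortcut is available, and the algorithm must genuinely enumerate every admissible copy-count for each item, which is exactly what the inner maximum does. The exact-weight constraint $\sum_i x_i = W$ (as opposed to $\le W$) is handled cleanly by the $-\infty$ infeasibility markers in the base case and recurrence, so it introduces no additional complication.
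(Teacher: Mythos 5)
Your proposal is correct and follows essentially the same approach as the paper: the same dynamic program over items with exact-weight states, the same $-\infty$ infeasibility convention, the same recurrence taking a maximum over admissible copy-counts of the current item, and the same $\bigo{nW^2}$ accounting. The extra remarks (restricting $D_i$ to $\{0,\ldots,W\}$ and noting that non-linearity forces full enumeration of the inner maximum) are sound but do not change the argument.
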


\begin{proof}
    First, define the value $\phi[t,w]$ to be the maximum possible sum
     $\sum_{i=1}^t f_i(x_i)$,
     subject to $\sum_{i=1}^t x_i=w$ and $x_i \in D_i$ for every $i$.
    Then, perform bottom-up dynamic programming as follows.
    \begin{itemize}
        \item Initialize: $
        \phi[0,w]=\begin{cases}0 \quad\quad\text{if }w=0,\\
            -\infty \quad\text{if }w>0 \ (\text{meaning } \infeasible).
        \end{cases}$
        \item Update: $\displaystyle\phi[t,w] = \max_{x_t \in D_t \land x_t \leq w}f_t(x_t) + \phi[t-1,w-x_t]$
        \item Result: $\phi[n,w]$ for $0 \leq w \leq W$ is the optimal value for weight $w$.
    \end{itemize}

    The base case ($\phi[0,w]$) represents the state where no item is in the knapsack,
    so both the objective and weight are $0$; otherwise, infeasible.
    For the inductive step, any optimal solution $\phi[t,w]$ can be decomposed into
    $f_t(x_t) + \sum_{i=1}^{t-1}f_i(x_i)$ for some $x_t$,
    and the latter term ($\sum_{i=1}^{t-1}f_i(x_i)$) must equal $\phi[t-1,w-x_t]$ by definition.
    We consider all possible integers $x_t$, and thus the algorithm is correct.

    Since $0 \leq t \leq n$, $0 \leq w \leq W$, and the update takes time $\bigo{W}$,
    the total running time is $\bigo{nW^2}$.
    By using the standard technique of backlinks, one can reconstruct the solution $\{x_i\}$
    within the same asymptotic running time.
\end{proof}

The following result is a natural by-product of the algorithm above.

\begin{corollary}\label{cor:f-knapsack}
    Given an integer $W$, \PrbFKnapsack for all weight capacities $0 \leq w \leq W$ can be solved
    in total time $\bigo{nW^2}$.
\end{corollary}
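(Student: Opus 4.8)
The plan is to observe that \cref{cor:f-knapsack} requires no new algorithm: the dynamic program already constructed in the proof of \cref{lem:f-knapsack} solves every capacity simultaneously. Concretely, that algorithm does not compute a single scalar but fills the entire table $\phi[t,w]$ for $0 \le t \le n$ and $0 \le w \le W$, and by definition its final row $\phi[n,w]$ is the maximum of $\sum_{i=1}^n f_i(x_i)$ subject to $\sum_{i=1}^n x_i = w$ and $x_i \in D_i$. But this is exactly the optimal value of \PrbFKnapsack with weight capacity $w$, since the problem imposes the weight constraint with equality. Hence the single entry $\phi[n,w]$ is already the answer for capacity $w$, with no prefix maximization needed.

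Given this, I would simply run the algorithm of \cref{lem:f-knapsack} once on input capacity $W$ and then read off $\phi[n,w]$ for every $w \in \{0,1,\ldots,W\}$. Since that run takes $\bigo{nW^2}$ time irrespective of how many entries of the last row we intend to report, all $W+1$ instances are solved within a single $\bigo{nW^2}$ budget, rather than incurring $\bigo{nW^2}$ per capacity. To additionally recover a witness $\{x_i\}$ for each $w$, I would keep the backlinks already described in the proof of \cref{lem:f-knapsack} and trace back from each of the $W+1$ entries $\phi[n,w]$; each traceback costs $\bigo{n}$, contributing only $\bigo{nW}$ in total and thus leaving the asymptotics unchanged.

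There is essentially no hard step here, as the statement is a direct by-product of the tabular structure of the dynamic program; the main point worth verifying is the semantic alignment used in the first paragraph. Because \PrbFKnapsack fixes the total weight to be exactly $W$, the value $\phi[n,w]$ is precisely the optimum for capacity exactly $w$, so reading the final row suffices. Had the problem instead asked for capacity at most $w$, one would take a running maximum $\max_{w' \le w}\phi[n,w']$ over the last row, which is still computable in $\bigo{W}$ additional time and would not affect the claimed bound.
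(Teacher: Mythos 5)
Your proposal is correct and matches the paper's intent exactly: the corollary is stated as a by-product of the dynamic program in \cref{lem:f-knapsack}, whose table already records $\phi[n,w]$ for every $0 \leq w \leq W$ in a single $\bigo{nW^2}$ run. The additional remarks on backlink tracebacks and the exact-versus-at-most capacity semantics are consistent with the paper and do not change the argument.
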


\subsection{Integer Quadratic Programming}

For \PrbMaxEHS, we use the following known result that \PrbIQPLong is FPT by the number of variables
 and coefficients.

\begin{problem}{\PrbIQPLong (\PrbIQP)}
    \Input & An $n \times n$ integer matrix $Q$,
    an $m \times n$ integer matrix $A$
    and an $m$-dimensional integer vector $b$.\\
    \Prob & Find a vector $x \in \Z^n$ minimizing $x^T Q x$, subject to $Ax \leq b$.
\end{problem}

\begin{proposition}[Lokshtanov \cite{lokshtanov_parameterized_2017}]\label{prop:iqp}
    There exists an algorithm that given an instance of \PrbIQP,
    runs in time $f(n, \alpha)L^{\bigo{1}}$,
    and outputs a vector $x \in \Z^n$.
    If the input IQP has a feasible solution then $x$ is feasible,
    and if the input IQP is not unbounded, then $x$ is an optimal solution.
    Here $\alpha$ denotes the largest absolute value of an entry of $Q$ and $A$,
    and $L$ is the total number of bits required to encode the input.
\end{proposition}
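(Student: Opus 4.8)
The plan is to reduce \PrbIQP to a bounded number of \emph{integer linear programming} (ILP) feasibility queries in fixed dimension, each solvable in time polynomial in $L$ by the classical algorithm of Lenstra (and its refinement by Kannan), whose running time is polynomial in the input size once the number of variables $n$ is treated as a constant. The linear constraints $Ax \le b$ are already in the right form; the entire difficulty lies in coping with the quadratic objective $x^T Q x$, which cannot be handed directly to an ILP solver.

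To handle the objective I would first reduce optimization to feasibility: binary search on a threshold $\theta$ and ask whether there is an integral $x$ with $Ax \le b$ and $x^T Q x \le \theta$. This trades one quadratic objective for one quadratic constraint, so the real content is deciding integer feasibility of a system carrying a single quadratic inequality in fixed dimension. For this I would diagonalize $Q$ over the rationals and separate its positive from its negative eigendirections. Along the positive-definite part the constraint $x^T Q x \le \theta$ carves out a bounded ellipsoid, so after an appropriate change of basis the relevant feasible set is a bounded polytope and Lenstra's algorithm applies. The negative and zero directions are precisely where the region is unbounded and a naive enumeration over a box fails.

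The main obstacle is therefore the \emph{indefinite} case, and this is exactly where the second parameter $\alpha$ (the largest absolute value of an entry of $Q$ and $A$) must be exploited: the bound on the coefficients has to be leveraged to show that only $f(n,\alpha)$ essentially-different subproblems arise---for instance by guessing a sign pattern, or the bounded residues of the linear forms $\langle q_i, x\rangle$ induced by the rows of $Q$, so that within each branch the remaining problem is a genuine ILP. Assembling these pieces, together with a separate Lenstra feasibility check for $Ax \le b$ and a boundedness test to isolate the unbounded case, would give the claimed $f(n,\alpha)\,L^{\bigo{1}}$ running time and the conditional guarantees on the returned vector $x$. I expect the delicate step to be proving that the indefinite branching stays bounded purely in terms of $n$ and $\alpha$, independently of $L$.
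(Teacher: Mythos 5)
This proposition is not proved in the paper at all: it is imported verbatim as a black box from Lokshtanov's work on parameterized integer quadratic programming and used only through the citation. So the relevant question is whether your sketch would actually establish the result, and as written it does not. The entire technical content of the theorem is the claim that the number of branches needed to linearize the quadratic part is bounded by a function of $n$ and $\alpha$ alone, and this is precisely the step you defer (``guessing a sign pattern, or the bounded residues \ldots''). Everything you do prove --- binary search on a threshold, Lenstra/Kannan for the resulting ILPs, the positive-definite case --- is the easy part; conceding the indefinite case means conceding the theorem.

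There are also two concrete technical problems with the route you propose. First, rational diagonalization of $Q$ is not a unimodular change of basis, so integer points of the original lattice do not map to integer points in the diagonalizing coordinates; ``after an appropriate change of basis the feasible set is a bounded polytope and Lenstra applies'' does not go through without an argument that respects the lattice. Second, for indefinite $Q$ the set $\{x : x^T Q x \le \theta\}$ is non-convex, so even in a single branch you cannot hand it to Lenstra's algorithm (which requires a convex, indeed polyhedral, region); the branching must genuinely replace the quadratic constraint by linear ones, and showing that $f(n,\alpha)$ such linear systems suffice is the heart of Lokshtanov's proof, which proceeds by a structural analysis of optimal solutions rather than by thresholding the objective. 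If you want to use this proposition, the honest options are to cite it, as the paper does, or to reproduce Lokshtanov's argument in full; the present outline is a plan for a proof, not a proof.
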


It is convenient to have a linear term in the objective function.
This can be achieved by introducing a new variable $\hat{x} = 1$ and
 adding $[0, q]$ as the corresponding row in $Q$ \cite{lokshtanov_parameterized_2017}.

\begin{corollary}\label{col:iqp}
    Proposition~\ref{prop:iqp} holds if we generalize the objective function from
    $x^T Q x$ to $x^T Q x + q^T x$ for some $n$-dimensional integer vector $q$.
    Here $\alpha$ is the largest absolute value of an entry of $Q$, $q$ and $A$.
\end{corollary}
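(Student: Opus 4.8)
The plan is to reduce an instance carrying the linear term to a pure-quadratic instance covered by Proposition~\ref{prop:iqp}, using the auxiliary-variable device indicated in the text preceding the statement. Concretely, given $Q \in \Z^{n\times n}$, $q \in \Z^n$, $A \in \Z^{m\times n}$, $b \in \Z^m$ and the objective $x^T Q x + q^T x$, I introduce one new variable $\hat{x}$ as coordinate $n+1$ and work with $y = (x^T, \hat{x})^T \in \Z^{n+1}$. I build the $(n+1)\times(n+1)$ matrix
\[
Q' = \begin{pmatrix} Q & 0 \\ q^T & 0 \end{pmatrix},
\]
so that $y^T Q' y = x^T Q x + \hat{x}\, q^T x$; placing $q$ entirely in the last row rather than splitting it symmetrically is what keeps $Q'$ integral. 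I then force $\hat{x}=1$ by appending to $A$ a zero column for $\hat{x}$ together with the two rows encoding $\hat{x} \le 1$ and $-\hat{x} \le -1$, with matching entries $1$ and $-1$ in $b$. Since $\hat{x}$ is integral, these two inequalities pin it to exactly $1$.

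Next I would verify the correspondence between the two instances. The map $x \mapsto (x,1)$ is a bijection between the feasible region of the original IQP and that of the constructed one: the appended constraints hold exactly when $\hat{x}=1$, and for such points the first $m$ rows of the extended constraint matrix reduce to $Ax \le b$. On this feasible region the objective values agree, $y^T Q' y = x^T Q x + q^T x$, so minimizers correspond and the constructed IQP is unbounded if and only if the original one is. The feasibility/optimality guarantees of Proposition~\ref{prop:iqp} therefore transfer verbatim: from the returned $y=(x,\hat{x})$ I output $x$, which is feasible whenever a feasible solution exists and optimal whenever the problem is not unbounded.

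Finally I would check that the parameters are preserved up to the slack allowed by $f(\cdot,\cdot)$ and the $L^{\bigo{1}}$ factor. The number of variables grows from $n$ to $n+1$; the largest absolute entry of $Q'$ and of the extended constraint matrix is $\max\{\alpha,1\}$, where $\alpha$ now denotes the largest absolute value of an entry of $Q$, $q$ and $A$ as in the statement; and the encoding length increases by only $O(n\log\alpha)$ for the new row and constraints, so $L' = L^{\bigo{1}}$. Hence the running time $f(n+1,\max\{\alpha,1\})\,(L')^{\bigo{1}}$ has the required form $f'(n,\alpha)\,L^{\bigo{1}}$, and the corollary follows from Proposition~\ref{prop:iqp} applied to the constructed instance. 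I do not expect a genuine obstacle; the only points requiring care are keeping $Q'$ integral (handled by the asymmetric, single-row placement of $q$) and encoding the equality $\hat{x}=1$ through the two inequalities that the $Ax \le b$ format permits, together with the routine bookkeeping that $n$ and $\alpha$ change only additively or by a constant so that the fixed-parameter form is maintained.
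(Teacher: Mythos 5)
Your proposal is correct and matches the paper's argument: the paper proves this corollary exactly by introducing an auxiliary variable $\hat{x}=1$ and placing $q$ in the corresponding new row of $Q$, citing Lokshtanov for the device. You have simply written out in full the feasibility, objective-correspondence, and parameter-bookkeeping details that the paper leaves implicit.
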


\section{Algorithms for Maximum Happy Set}

Now we describe our FPT algorithms for \PrbMaxHS with respect to modular-width and clique-width.
At a high level, we employ a bottom-up dynamic programming (DP) approach on the parse-tree of
a given graph, considering each node once.
At each node, we use several techniques on precomputed results to update the DP table.
For simplicity, our DP tables store the maximum number of happy vertices.
Like other DP applications, a \textit{certificate}, i.e. the actual happy set,
can be found by using backlinks within the same asymptotic running time.\looseness-1

\subsection{Parameterized by modular-width}

We give an algorithm whose running time is singly-exponential in
the modular-width, quadratic in $k$ and linear in the graph size.

\begin{theorem}\label{thm:hs-mw}
    \PrbMaxHS can be solved in time $\mathcal{O}(2^{\mw}\cdot \mw \cdot k^2 \cdot |V(G)|)$,
    where $\mw$ is the modular-width of the input graph $G$.
\end{theorem}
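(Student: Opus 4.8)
The plan is to run a bottom-up dynamic program along the parse-tree furnished by the modular decomposition, touching each node once. For a node $t$ whose associated module is the subgraph $H_t$, the table stores, for every $0 \le s \le \min(k, n(H_t))$, a value $D_t[s]$ defined as the maximum number of vertices of $H_t$ that are happy \emph{with respect to the internal edges of $H_t$ only}, when exactly $s$ vertices of $H_t$ lie in the happy set. The observation that lets this internal count be combined across levels is \cref{lem:entire-unhappy}: when $H_t$ is later joined to other parts, a vertex of $H_t$ can stay happy only if every part joined to it is entire. Because all vertices of a module share the same outside neighborhood, the external contribution to (un)happiness is governed entirely by which neighboring modules are entire, so $D_t[s]$ is exactly the quantity to carry upward. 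At the root, $H_t = G$ has no external neighbors, hence $D_{\mathrm{root}}[k]$ is the desired optimum, and a happy set realizing it is recovered by backlinks.

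The base case (O1) sets $D_t[0]=0$ and $D_t[1]=1$, since an isolated vertex placed in $S$ has all of its (zero) internal neighbors in $S$. For the disjoint union (O2) I would compute $D_t$ as a $(\max,+)$-combination of the two children's tables; the complete join (O3) does the same but, invoking \cref{lem:entire-unhappy}, zeroes out a child's contribution unless the other child is entire (fully selected). Both are the two-module special cases of the substitute operation (O4), which is the heart of the argument and the only step whose cost is exponential in $\mw$.

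For (O4) with quotient graph $G$ on $p \le \mw$ vertices $v_1,\dots,v_p$ and substituted modules $M_1,\dots,M_p$, a vertex $v \in M_i$ is happy in the resulting graph if and only if (a) $v$ is happy inside $M_i$ and (b) every module $M_j$ with $v_i v_j \in E(G)$ is entire. Condition (b) depends only on the \emph{entire-pattern} $T = \{\, j : M_j \text{ is entire}\,\} \subseteq [p]$, so the plan is to enumerate all $2^p \le 2^{\mw}$ candidate patterns. For a fixed $T$ the objective decouples across modules: module $M_i$ contributes $D_{M_i}[s_i]$ if $N_G(v_i) \subseteq T$ and $0$ otherwise, where $s_i$ is the number of selected vertices of $M_i$. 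Forcing $s_i = n(M_i)$ for $i \in T$ and $s_i \le n(M_i)-1$ for $i \notin T$ — so that the realized entire set is exactly $T$ — turns the computation of $D_t[s]$ into an instance of \PrbFKnapsack with $p$ unit-weight items and the per-item value functions above. By \cref{cor:f-knapsack} this is solved for all target sizes $0 \le s \le k$ at once in time $\bigo{p \cdot k^2} = \bigo{\mw \cdot k^2}$, and $D_t[s]$ is the maximum over the $2^{\mw}$ patterns.

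Since the parse-tree has $\bigo{|V(G)|}$ nodes and each costs $\bigo{2^{\mw} \cdot \mw \cdot k^2}$, the total running time is $\bigo{2^{\mw} \cdot \mw \cdot k^2 \cdot |V(G)|}$, as claimed. The main obstacle I anticipate is \emph{decoupling the mutual dependence between modules}: whether a vertex of one module is happy depends on whether its neighboring modules are fully selected, and vice versa, so the per-module choices are not independent a priori. Enumerating the entire-pattern $T$ is precisely what breaks this circularity — once the set of entire modules is fixed, each module becomes an independent knapsack item. The care needed is therefore in verifying that the domain restrictions make the guessed $T$ self-consistent (no module outside $T$ is accidentally filled, and every module in $T$ is) and that \cref{lem:entire-unhappy} genuinely justifies zeroing the contribution of any module that has a non-entire neighbor.
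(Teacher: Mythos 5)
Your proposal is correct and follows essentially the same route as the paper: the same DP table over the modular parse-tree, the same use of \cref{lem:entire-unhappy}, the same enumeration of the $2^{\mw}$ choices of entire modules (your entire-pattern $T$ is the paper's Type I--IV classification in disguise), and the same reduction of each fixed pattern to \PrbFKnapsack solved for all capacities at once via \cref{cor:f-knapsack}. The only cosmetic difference is that you encode each module's type through the condition $N_G(v_i)\subseteq T$ together with membership in $T$, rather than naming the four types explicitly.
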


Our algorithm follows the common framework seen in \cite{modular-width-gajarsky}.
Given a graph $G$, a parse-tree with modular-width $\mw$ can be computed in linear-time \cite{tedder_simple_2007}.
The number of nodes in the parse-tree is linear in $|V(G)|$ \cite{modular-width-gajarsky}.
Our algorithm traverses the parse-tree from the bottom,
considering only operation (O4), as operations (O2)-(O3) can be replaced with a single operation (O4)
with at most two arguments \cite{modular-width-gajarsky}.
Further, we assume $2 \leq \mw < k$ without loss of generality.

Each node in the parse-tree corresponds to an induced subgraph of $G$, which we write $\mathcal{G}$.
We keep track of a table $\phi[\mathcal{G}, w]$,
the maximum number of happy vertices for $\mathcal{G}$ with regard to a happy set of size $w$.
We may assume $0 \leq w \leq k$ because we do not have to consider a happy set larger than size $k$.
The entries of the DP table are initialized with $\phi[\mathcal{G},w] = -\infty$.
For the base case, a graph $G_0$ with a single vertex introduced by operation (O1),
 we set $\phi[G_0,0]=0$ and $\phi[G_0,1]=1$.
The solution to the original problem is given by $\phi[G,k]$.

Our remaining task is to compute, given a graph substitution $\mathcal{G}=H(G_1,\ldots,G_n)$ ($n \leq \mw$),
the values of $\phi[\mathcal{G},w]$ provided partial solutions $\phi[G_1,\cdot],\ldots,\phi[G_n,\cdot]$.
We first choose a set of entire subgraphs from $G_1,\ldots,G_n$.
Then, we identify the \textit{subgraph type} for each $G_i$ during a graph substitution.

\begin{definition}[subgraph type]
    Given a graph substitution $H(G_1,\ldots,G_n)$, where $v_i \in V(H)$ is substituted by $G_i$,
     and a happy set $S$,
     we categorize each substituted subgraph $G_i$ into the following four types.

     \begin{itemize}
         \item Type I: $G_i$ is entire and for every $j$ such that $v_j \in N_H(v_i)$, $G_j$ is entire.
         \item Type II: $G_i$ is not entire and for every $j$ such that $v_j \in N_H(v_i)$, $G_j$ is entire.
         \item Type III: $G_i$ is entire and not Type I.
         \item Type IV: $G_i$ is not entire and not Type II.
     \end{itemize}
\end{definition}

Intuitively, Type I and II subgraphs are surrounded by entire subgraphs in $H$,
the \textit{metagraph} to substitute, and
Type I and III subgraphs are entire.
A pictorial representation of this partition is presented in \cref{fig:alg-hs-mw}.
Observe that from \cref{lem:entire-unhappy}, the subgraphs with Type III and IV cannot
 include any happy vertices.
Further, Type II subgraphs are independent in $H$ because their neighbors must be of Type III.
This ensures that the choice of a happy set in Type II is independent of other subgraphs.

\begin{figure*}[t]
    \centering

    \pgfdeclarelayer{bg}
    \pgfsetlayers{bg, main}
    \usetikzlibrary{calc}

    \tikzset{
        old inner xsep/.estore in=\oldinnerxsep,
        old inner ysep/.estore in=\oldinnerysep,
        double circle/.style 2 args={
            circle,
            old inner xsep=\pgfkeysvalueof{/pgf/inner xsep},
            old inner ysep=\pgfkeysvalueof{/pgf/inner ysep},
            /pgf/inner xsep=\oldinnerxsep+#1,
            /pgf/inner ysep=\oldinnerysep+#1,
            alias=sourcenode,
            append after command={
            let     \p1 = (sourcenode.center),
                    \p2 = (sourcenode.east),
                    \n1 = {\x2-\x1-#1-0.5*\pgflinewidth}
            in
                node [inner sep=0pt, draw, circle, minimum width=2*\n1,at=(\p1),#2] {}
            }
        },
        double circle/.default={-3pt}{black}
    }
    \tikzstyle{plain} = [circle, fill=white, text=black, draw, thick, scale=1, minimum size=0.5cm, inner sep=1.5pt]
    \tikzstyle{small} = [circle, fill=white, text=black, draw, thick, scale=1, minimum size=0.2cm, inner sep=1.5pt]
    \tikzstyle{entire} = [circle, fill=black!25, text=black, draw, thick, scale=1, minimum size=0.8cm, inner sep=1.5pt]
    \tikzstyle{large} = [circle, fill=white, text=black, draw, thick, scale=1, minimum size=0.8cm, inner sep=1.5pt]
    \tikzstyle{type1} = [double circle={-3pt}{black}, fill=black!25, text=black, draw, thick, scale=1, minimum size=0.8cm, inner sep=1.5pt]
    \tikzstyle{type2} = [double circle={-3pt}{black, dashed}, fill=white, text=black, draw, thick, scale=1, minimum size=0.8cm, inner sep=1.5pt]

    \begin{minipage}[m]{.96\textwidth}
        \vspace{0pt}
        \centering
        \begin{tikzpicture}
            \node[type1] (v1)  at (0, 2) {I};
            \node[type1] (v2)  at (0, 0) {I};
            \node[entire] (v3)  at (2, 2) {III};
            \node[type2]  (v4)  at (2, 0) {II};
            \node[entire] (v5)  at (4, 2) {III};
            \node[entire] (v6)  at (4, 0) {III};
            \node[large]  (v7)  at (6, 2) {IV};
            \node[type2]  (v8)  at (6, 0) {II};
            \node[large]  (v9)  at (8, 2) {IV};
            \node[type1] (v10) at (10, 2) {I};
            \node[type2]  (v11) at (10, 0) {II};
            \draw[very thick] (v1) -- (v2);
            \draw[very thick] (v1) -- (v3);
            \draw[very thick] (v2) -- (v3);
            \draw[very thick] (v3) -- (v4);
            \draw[very thick] (v3) -- (v5);
            \draw[very thick] (v4) -- (v5);
            \draw[very thick] (v4) -- (v6);
            \draw[very thick] (v5) -- (v7);
            \draw[very thick] (v6) -- (v8);
            \draw[very thick] (v7) -- (v9);

            \node at (-0.75, 2.4) {$G_1$};
            \node at (-0.75, 0.4) {$G_2$};
            \node at (1.4, 2.4) {$G_3$};
            \node at (1.25, 0.4) {$G_4$};
            \node at (3.4, 2.4) {$G_5$};
            \node at (3.4, 0.4) {$G_6$};
            \node at (5.4, 2.4) {$G_7$};
            \node at (5.3, 0.4) {$G_8$};
            \node at (7.4, 2.4) {$G_9$};
            \node at (9.2, 2.4) {$G_{10}$};
            \node at (9.2, 0.4) {$G_{11}$};
        \end{tikzpicture}
    \end{minipage}
    \caption{%
    Four types of the subgraphs after applying operation (O4).
    Entire subgraphs (Type I and III) are shaded in gray.
    A subgraph becomes Type III or IV if it has a non-entire neighbor (e.g. $G_3$, $G_9$).
    In a Type I subgraph, all vertices are happy.
    Type II subgraphs may or may not admit a happy vertex.
    Type III and IV subgraphs cannot contain a happy vertex, as it is adjacent to a non-entire subgraph.
    }
    \label{fig:alg-hs-mw}
\end{figure*}

Lastly, we formulate an \PrbFKnapsack instance
as described in the following algorithm for updating the DP table on a single operation (O4).

\begin{alg}[\AlgMaxHSMW]
    \label{alg:max-hs-mw}
    Given a graph substitution $\mathcal{G}=H(G_1,\ldots,G_n)$ and 
    partial solutions $\phi[G_1,\cdot],\ldots,\phi[G_n,\cdot]$,
    consider all combinations of entire subgraphs from $G_1,\ldots,G_n$
    and proceed the following steps.

    (Step 1) Identify subgraph types for $G_1,\ldots,G_n$.
    (Step 2) Formulate an \PrbFKnapsack instance with capacity $k$ and value functions $f_i$,
    based on the subgraph $G_i$'s type as follows.
   \begin{itemize}
       \item \makebox[4em][l]{Type I}:   \makebox[8em][l]{$f_i(x) = \abs{G_i},$}  $x = \abs{G_i}$.
       \item \makebox[4em][l]{Type II}:  \makebox[8em][l]{$f_i(x) = \phi[G_i,x],$} $0 \leq x < \abs{G_i}$.
       \item \makebox[4em][l]{Type III}: \makebox[8em][l]{$f_i(x) = 0,$}          $x = \abs{G_i}$.
       \item \makebox[4em][l]{Type IV}:  \makebox[8em][l]{$f_i(x) = 0,$}          $0 \leq x < \abs{G_i}$.
   \end{itemize}
   Then, update the DP table entries $\phi[\mathcal{G}, w]$ for $0 \leq w \leq k$
    with the solution to \PrbFKnapsack,
    if its value is greater than the current value.
\end{alg}

We now prove that the runtime of this algorithm is FPT with respect to modular-width.

\begin{lemma}\label{lem:max-hs-mw}
    \cref{alg:max-hs-mw} correctly computes $\phi[\mathcal{G},w]$ for every $0 \leq w \leq k$
    in time $\bigo{2^{\mw}\cdot \mw \cdot k^2}$.
\end{lemma}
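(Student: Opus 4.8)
The plan is to prove two things: correctness of the DP update (Algorithm~\ref{alg:max-hs-mw}), and its claimed running time $\bigo{2^{\mw} \cdot \mw \cdot k^2}$. The correctness argument rests on showing that the \PrbFKnapsack formulation faithfully captures the count of happy vertices in $\mathcal{G} = H(G_1,\ldots,G_n)$ once a set of entire subgraphs is fixed, and that ranging over \emph{all} $2^n$ choices of entire subgraphs covers every possible happy set. The running time then follows by bounding the per-choice cost via Lemma~\ref{lem:f-knapsack} and Corollary~\ref{cor:f-knapsack}, summed over the $2^n \le 2^{\mw}$ subsets.

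\textbf{Correctness.} First I would fix an arbitrary happy set $S$ of size $w$ in $\mathcal{G}$ and argue that its happy-vertex count is exactly reproduced by the \PrbFKnapsack instance for the correct choice of entire subgraphs. The key observation, which I would make precise, is that $S$ induces in each $G_i$ a sub-happy-set of some size $x_i$, with $\sum_i x_i = w$, and that whether $G_i$ is entire is determined by $x_i = \abs{G_i}$. Once the entire-subgraph pattern is fixed, the subgraph type of each $G_i$ is determined purely by $H$ (via whether all $H$-neighbors are entire), so Step~1 is well-defined. I would then verify type-by-type that $f_i(x_i)$ equals the number of happy vertices \emph{contributed by $G_i$} under $S$: for Type~I, all $\abs{G_i}$ vertices are happy (entire subgraph, all neighbors entire, so closed neighborhoods lie in $S$); for Type~II, the happy vertices are exactly those happy \emph{within} $G_i$ --- here I must invoke the remark following the type definition, that Type~II subgraphs are independent in $H$ and surrounded by entire neighbors, so a vertex of $G_i$ is happy in $\mathcal{G}$ iff it is happy in $G_i$ alone, giving $f_i(x_i) = \phi[G_i, x_i]$; and for Types~III and~IV, Lemma~\ref{lem:entire-unhappy} forces zero happy vertices because each has a non-entire $H$-neighbor. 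The domain restrictions ($x = \abs{G_i}$ for entire types I/III, $0 \le x < \abs{G_i}$ for non-entire types II/IV) exactly encode the entire/non-entire choice. Summing, the \PrbFKnapsack objective $\sum_i f_i(x_i)$ equals the total happy count, so the maximum over all entire-subgraph patterns and all feasible $\{x_i\}$ equals $\phi[\mathcal{G}, w]$.

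\textbf{The main obstacle} will be the Type~II independence claim: I must argue carefully that the happiness of a Type~II vertex depends only on $G_i$ and not on the happy-set choices in other subgraphs. This hinges on the fact that every $H$-neighbor of a Type~II subgraph is entire (else $G_i$ would be Type~IV), so all cross-module neighbors of any $v \in G_i$ lie in $S$; hence $N[v] \subseteq S$ is equivalent to $N_{G_i}[v] \subseteq S \cap V(G_i)$, decoupling the count. I would state this as the crux and verify it against the definition, since it is what legitimizes using the precomputed table $\phi[G_i,\cdot]$ as the Type~II value function.

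\textbf{Running time.} For a single choice of entire subgraphs, Step~1 labels all $n$ subgraphs in $\bigo{n}$ time (checking, for each $G_i$, whether its $H$-neighbors are all entire), and building the value functions is $\bigo{\sum_i \abs{D_i}} = \bigo{k}$ since each $x_i \le k$. By Corollary~\ref{cor:f-knapsack}, solving \PrbFKnapsack for all capacities $0 \le w \le k$ costs $\bigo{n k^2}$. Multiplying by the $2^n$ subsets of entire subgraphs and using $n \le \mw$ gives $\bigo{2^{\mw} \cdot \mw \cdot k^2}$, matching the claim. I would close by noting that updating $\phi[\mathcal{G},\cdot]$ with the maximum over all these \PrbFKnapsack solutions is absorbed in the same bound.
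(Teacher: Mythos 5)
Your proposal is correct and follows essentially the same route as the paper's proof: enumerate all $2^n$ entire-subgraph patterns, justify the four value functions type-by-type (using Lemma~\ref{lem:entire-unhappy} for Types III/IV and the independence of Type~II subgraphs in $H$ to decouple their counts and reuse $\phi[G_i,\cdot]$), and bound the cost per pattern via Corollary~\ref{cor:f-knapsack}. The only quibbles are cosmetic: your $\bigo{n}$ bound for Step~1 should be $\bigo{\abs{E(H)}}=\bigo{n^2}$ and the value-function construction is $\bigo{nk}$ rather than $\bigo{k}$, but both are absorbed into the $\bigo{nk^2}$ term since $n \leq \mw < k$.
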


\begin{proof}
    First, the algorithm considers all possible sets of entire substituted subgraphs ($G_1,\ldots,G_n$).
    The optimal solution must belong to one of them.
    It remains to prove the correctness of the \PrbFKnapsack formulation in step 2.
    From \cref{lem:entire-unhappy}, the subgraphs of Type III and IV cannot increase the number of
     happy vertices, so we set $f_i(x)=0$.
    For Type I, the algorithm has no option but
     to include all of $V(G_i)$ in the happy set, and they are all happy.

    The subgraphs of Type II are the only ones that use previous results, $\phi[G_i,\cdot]$.
    Since the new neighbors to $G_i$ are required to be in the happy set, for any choice of the happy set in $G_i$,
    happy vertices remain happy, and unhappy vertices remain unhappy.
    Thus, we can directly use $\phi[G_i,\cdot]$;
     its choice does not affect other substituted subgraphs,
     as Type II subgraphs are independent in $H$.
    The domain of functions $f_i$ is naturally determined by the definition of entire subgraphs.

    Now, consider the running time of \cref{alg:max-hs-mw}.
    It considers $2^n$ possible combinations of entire subgraphs.
    Step 1 can be done by checking neighbors for each vertex in $H$,
    so the running time is $\bigo{\abs{E(H)}} = \bigo{n ^ 2}$.
    And step 2 takes time $\bigo{n k^2}$ from \cref{cor:f-knapsack}.
    The total running time is $\bigo{2^n (n^2 + nk^2)} = \bigo{2^{\mw}\cdot \mw \cdot k^2}$
     as we assume $n \leq \mw < k$.
\end{proof}

\begin{proof}[Proof of \cref{thm:hs-mw}]
    It is trivial to see that the base case of the DP is valid,
    and the correctness of inductive steps is given by \cref{lem:max-hs-mw}.
    We process each node of the parse-tree once,
     and it has $\bigo{\abs{V(G)}}$ nodes \cite{modular-width-gajarsky}.
    Thus, the overall runtime is $\mathcal{O}(2^{\mw}\cdot \mw \cdot k^2 \cdot |V(G)|)$.
\end{proof}

\subsection{Parameterized by clique-width}

We provide an algorithm for \PrbMaxHS parameterized only by clique-width ($\cw$),
 which no longer requires a combined parameter with solution size $k$
 as presented in \cite{asahiro2021parameterized}.

\begin{theorem}\label{thm:hs-cw}
    Given a $\cw$-expression tree of a graph $G$ with clique-width $\cw$,
    \PrbMaxHS can be solved in time $\mathcal{O}(8^{\cw} \cdot k^2 \cdot |V(G)|)$.
\end{theorem}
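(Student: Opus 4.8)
The plan is to mirror the bottom-up dynamic program of \cref{thm:hs-mw}, but now to traverse the \emph{given} $\cw$-expression tree and process the operations (O1)--(O4) one node at a time. The essential new difficulty, compared with the modular-width case, is that a clique-width label class may later be extended (by unions) and joined repeatedly, so a single ``entire/not-entire'' flag per subgraph no longer suffices. For each node, corresponding to a labeled subgraph $\mathcal{G}$, I would keep a table $\phi[\mathcal{G}, \sigma, w]$, where $w \le k$ is the number of vertices placed in the happy set $S$ and $\sigma$ is a \emph{label-state vector} assigning to each of the $\cw$ labels one of $8$ states. Concretely, for a label $\ell$ I record three independent bits: whether $\ell$ currently contains (i) a vertex outside $S$, (ii) a vertex in $S$ whose current neighbors all lie in $S$ (a \emph{happy candidate}), and (iii) a vertex in $S$ with a neighbor outside $S$ (already unhappy). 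This gives $2^3 = 8$ states per label and hence the $8^{\cw}$ factor; note that label $\ell$ is \emph{entire} exactly when bit (i) is absent, which ties the state to \cref{lem:entire-unhappy}.

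Next I would specify the transition for each operation. Operation (O1) is the base case (a single labeled vertex, placed in or out of $S$). Operation (O4), the relabeling $\rho(i,j)$, merges the classes of labels $i$ and $j$, so its new entry is obtained by OR-combining the corresponding bit-triples. Operation (O2), the disjoint union $G_1 \oplus G_2$, is where the second factor of $k$ arises: for each target $(\sigma, w)$ I combine the two children by a $(\max,+)$-convolution over the size parameter ($w = w_1 + w_2$) together with the OR-combination of their label states, exactly as the knapsack merge underlying \cref{cor:f-knapsack} combines value functions. The interesting case is the join $\eta(i,j)$: by \cref{lem:entire-unhappy}, a happy candidate of label $i$ survives the join iff label $j$ is entire, and symmetrically, so a happy candidate is ``killed'' (it moves from bit (ii) to bit (iii)) precisely when its label is joined to a non-entire partner.

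The main obstacle is accounting for happy vertices across joins \emph{without} tracking a separate count per label, which would inflate the state to $k^{\cw}$ and destroy the bound. I would resolve this with a \emph{commit-and-forbid} scheme analogous to the Type~I condition in \cref{thm:hs-mw}: a vertex contributes to the stored objective only once we commit that it will remain happy, and a label that carries a committed vertex is forbidden from being joined to a non-entire label. Because committed vertices are then never un-counted, the objective is monotone and can be stored as the single value $\phi[\mathcal{G},\sigma,w]$, while competing commitment patterns are carried in distinct states and reconciled by the DP's maximization. Correctness would follow by induction on the expression tree using \cref{prop:unhappy} (a vertex is happy iff it lies in $S$ and has no neighbor outside $S$), with the root value $\phi[G,\cdot,k]$ giving the maximum number of happy vertices. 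For the running time, each of the $\bigo{|V(G)|}$ nodes carries a table of size $\bigo{8^{\cw}\cdot k}$, and the union step's convolution over the size parameter contributes the extra factor of $k$, yielding $\bigo{8^{\cw}\cdot k^2 \cdot |V(G)|}$ overall. I expect the most delicate step to be verifying that the commit-and-forbid restriction never discards an optimal solution, i.e.\ that every happy vertex of an optimum can be consistently committed under some reachable label-state vector.
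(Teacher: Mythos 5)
Your high-level plan is in the same family as the paper's: a bottom-up DP over the expression tree that records, per label, whether the label class is entire, uses \cref{lem:entire-unhappy} at join nodes to decide which labels can still contribute happy vertices, and counts a vertex only in a way that never has to be undone. However, there are two concrete gaps. First, your state space does not do the work you need it to. Bit (iii) (``already-unhappy vertex in $S$'') influences no transition and is dead weight, while the crucial piece of information --- the per-label \emph{commitment} decision of your commit-and-forbid scheme --- is not any of your three existential bits: bit (ii) as defined (``there exists a locally happy $S$-vertex in this label'') is a property of the partial solution, whereas the commitment is a \emph{choice}, and two partial solutions with identical bit patterns but different commitments must be kept in distinct table entries for the forbid rule to be sound. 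The paper resolves exactly this by indexing the table with a set $T$ of \emph{target labels} (a declared commitment) alongside the set $X$ of entire labels; a join $\eta(i,j)$ with $j\notin X$ is handled by simply looking up the child entry with $i$ removed from $T$ (\cref{lem:cw-join}), and the root answer is $\max_X\phi[r,k,X,L]$. You would need to turn bit (ii) into a declared bit of this kind rather than an existential one, and verify that committing either all or none of a label's locally happy vertices loses nothing (which holds because all vertices of a label are affected identically by future joins).

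Second, your running-time analysis of the union node does not go through. You claim the table has size $\bigo{8^{\cw}k}$ and that the union contributes only one extra factor of $k$, but OR-combining the label states of two children requires enumerating pairs $(\sigma_1,\sigma_2)$: naively $64^{\cw}$ pairs, or $3^{3\cw}=27^{\cw}$ if for each target $\sigma$ you enumerate only covering pairs, and there is no $(\max,+)$ analogue of fast subset convolution that brings this down to $8^{\cw}$. The paper's $8^{\cw}$ arises differently: its per-node table has only $\bigo{4^{\cw}k}$ entries (two bits per label, via $X$ and $T$), the target set $T$ is \emph{shared} between the two children rather than combined, and only the entire-sets $X_1,X_2$ are enumerated independently, giving $2^{\cw}\cdot 2^{\cw}\cdot 2^{\cw}\cdot k\cdot k=8^{\cw}k^2$ per union node. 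With three OR-combined bits per label your stated bound is not met; dropping bit (iii) and making bit (ii) a shared target flag collapses your scheme to the paper's and recovers the claimed running time.
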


Here we assume that we are given a $\cw$-expression tree,
where each node $t$ represents a \textit{labeled} graph $G_t$.
A labeled graph is a graph whose vertices are labeled by integers in $L=\{1, \ldots, \cw\}$.
Every node must be one of the following:
introduce node $i(v)$,
union node $G_1 \oplus G_2$,
relabel node $\rho(i,j)$, or
join node $\eta(i,j)$.
We write $V_i$ for the set of vertices with label $i$.\looseness-1

Our algorithm traverses the $\cw$-expression tree from the leaves
 and performs dynamic programming.
For every node $t$, we keep track of the annotated partial solution $\phi[t, w, X, T]$,
for every integer $0 \leq w \leq k$ and sets of labels $X, T \subseteq L$.
We call $X$ the \textit{entire labels} and $T$ the \textit{target labels}.
$\phi[t, w, X, T]$ is defined to be the maximum number of happy vertices
 having target labels $T$ for $G_t$
 with respect to a happy set $S \subseteq V(G_t)$ of size $w$ such that
 $V_\ell$ is entire to $S$ if and only if $\ell \in X$.
The entries of the DP table are initialized with $\phi[t, w, X, T] = -\infty$.
The solution to the original graph $G$ is computed by $\max_{X \subseteq L} \phi[r, k, X, L]$,
where $r$ is the root of the $\cw$-expression tree.
Now we claim the following recursive formula for each node type.\looseness-1

\begin{lemma}[Formula for introduce nodes]
    Suppose $t$ is an introduce node, where a vertex $v$ with label $i$ is introduced.
    Then, the following holds.
    \begin{align*}
        \phi[t,w,X,T] &= \begin{cases}
            1 &\text{if } w=1,\  X=L \text{ and } i \in T;\\
            0 &\text{if } w=1,\  X=L \text{ and } i \notin T;\\
            0 &\text{if } w=0 \text{ and } X = L \setminus \{i\};\\
            -\infty &\text{otherwise.}
        \end{cases}
    \end{align*}
\end{lemma}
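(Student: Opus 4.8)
The plan is to prove the formula by a direct case analysis on the size $w$ of the happy set, exploiting the fact that an introduce node $t$ has the trivial vertex set $V(G_t) = \{v\}$. Since $S \subseteq V(G_t)$ with $\abs{S} = w$, the only admissible values are $w \in \{0,1\}$, realized by $S = \emptyset$ and $S = \{v\}$ respectively; every $w \geq 2$ leaves $\phi$ at its initialized value $-\infty$, which already covers part of the ``otherwise'' branch. So the whole argument reduces to pinning down, for each of the two nontrivial happy sets, which entire-label sets $X$ are consistent, whether $v$ is happy, and whether $v$ is counted against the target labels $T$.

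First I would treat $w = 1$, i.e.\ $S = \{v\}$. Because $v$ is introduced as an isolated vertex, $N[v] = \{v\} \subseteq S$, so $v$ is happy. Its label is $i$, hence it contributes to the count of happy vertices \emph{with target labels $T$} precisely when $i \in T$, giving the values $1$ and $0$ of the first two cases. For the entire-label bookkeeping, $V_i = \{v\} \subseteq S$ is entire, and every other class $V_\ell$ with $\ell \neq i$ is empty and therefore entire (the empty subgraph is always entire). Thus the unique feasible choice is $X = L$, and any other $X$ yields $-\infty$.

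Next I would treat $w = 0$, i.e.\ $S = \emptyset$. Now $v \notin S$, so $N[v] \not\subseteq S$ and $v$ is unhappy (equivalently, by Proposition~\ref{prop:unhappy}, $v \in N[V(G_t)\setminus S]$); the count is $0$ regardless of $T$. Here $V_i = \{v\}$ is not contained in $S$, so label $i$ is not entire, whereas all empty classes $V_\ell$, $\ell \neq i$, are entire. Hence the only feasible entire-label set is $X = L \setminus \{i\}$, matching the third case, and all other $X$ give $-\infty$.

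I do not expect a genuine obstacle here, since this is the base case of the recursion; the only points requiring care are (a) remembering that empty label classes count as entire, which forces $X$ to be exactly $L$ or exactly $L \setminus \{i\}$ rather than merely to contain or omit $i$, and (b) the target-label accounting, which filters the single (possibly happy) vertex by whether its label $i$ lies in $T$. Collecting the feasible cases against their complementary $-\infty$ entries then establishes the stated piecewise formula.
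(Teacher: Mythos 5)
Your proposal is correct and follows essentially the same reasoning as the paper's proof: enumerate the two feasible happy sets $S=\emptyset$ and $S=\{v\}$, note that all labels other than $i$ are empty and hence entire (forcing $X=L$ or $X=L\setminus\{i\}$ respectively), and filter the single happy vertex by whether $i\in T$. Your treatment is slightly more explicit about the $w\geq 2$ case and the happiness of the isolated vertex, but the argument is the same.
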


\begin{proof}
    First, notice that all labels but $i$ are empty and thus entire.
    If we include $v$ in the happy set, then we get $w=1$ and $X=L$ (all labels are entire).
    The resulting value depends on the target labels.
    If $i$ is a target label, i.e. $i \in T$, then $v$ is a happy vertex having a target label,
     resulting in $\phi[t,w,X,T]=1$.
    Otherwise, $\phi[t,w,X,T]=0$.
    If $w=0$, then label $i$ cannot be entire, and the only feasible solution is
     $\phi[t,0,L \setminus \{i\},T]=0$.
\end{proof}

\begin{lemma}[Formula for union nodes]
    Suppose $t$ is a union node with child nodes $t_1$ and $t_2$.
    Then, the following holds.
    \begin{align*}
        \phi[t, w, X, T] &=
        \max_{0 \leq \tilde{w} \leq w}
        \max_{\substack{X_1, X_2 \subseteq L\\: X_1 \cap X_2 = X}}
        \phi[t_1, \tilde{w}, X_1, T] + \phi[t_2, w-\tilde{w}, X_2, T]
    \end{align*}
\end{lemma}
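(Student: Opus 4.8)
The plan is to exploit that a union node creates a \emph{disjoint} union $G_t = G_{t_1} \oplus G_{t_2}$, so that both the happiness of every vertex and the entireness of every label factor cleanly across the two children. I would first fix an optimal happy set $S \subseteq V(G_t)$ of size $w$ whose entire labels are exactly $X$, and split it as $S_1 = S \cap V(G_{t_1})$ and $S_2 = S \cap V(G_{t_2})$, writing $\tilde{w} = |S_1|$ so that $|S_2| = w - \tilde{w}$ for some $0 \leq \tilde{w} \leq w$.

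The two structural facts to establish are the following. First, since there are no edges between $V(G_{t_1})$ and $V(G_{t_2})$, for any $v \in V(G_{t_1})$ we have $N_{G_t}[v] = N_{G_{t_1}}[v]$ (and symmetrically for $G_{t_2}$), so $v$ is happy with respect to $S$ in $G_t$ if and only if it is happy with respect to $S_1$ in $G_{t_1}$. A vertex's label is also unchanged by the union, so the number of happy vertices carrying a target label in $T$ is additive over the two parts. Second, a label $\ell$ is entire with respect to $S$ in $G_t$ if and only if all its $\ell$-labeled vertices lie in $S$, which holds if and only if $\ell$ is entire with respect to $S_1$ in $G_{t_1}$ \emph{and} entire with respect to $S_2$ in $G_{t_2}$; writing $X_1, X_2$ for the entire-label sets of $S_1, S_2$, this gives exactly $X = X_1 \cap X_2$ (a label with no occurrence in a child is vacuously entire there, which is consistent with this identity).

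With these facts, I would prove the formula by two inequalities. For ``$\leq$'', the optimal $S$ decomposes into $S_1, S_2$ that are feasible for $\phi[t_1, \tilde{w}, X_1, T]$ and $\phi[t_2, w-\tilde{w}, X_2, T]$ with $X_1 \cap X_2 = X$, and whose happy-vertex contributions sum to $\phi[t, w, X, T]$; since each part is bounded by its own $\phi$-value, the left-hand side is at most the displayed maximum. For ``$\geq$'', given any $\tilde{w}$ and any $X_1, X_2$ with $X_1 \cap X_2 = X$ attaining the maximum, take happy sets $S_1, S_2$ realizing $\phi[t_1, \tilde{w}, X_1, T]$ and $\phi[t_2, w-\tilde{w}, X_2, T]$; their disjoint union $S = S_1 \cup S_2$ has size $w$, entire labels exactly $X_1 \cap X_2 = X$, and (by the factorization) a count of happy vertices with target labels equal to the sum, so $\phi[t, w, X, T]$ is at least the displayed maximum. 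The key point enabling the two separate maximizations to combine is that the children are independent: there is no cross-constraint beyond the weight split $\tilde{w}$ and the intersection condition on the entire labels.

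The main obstacle I expect is the bookkeeping around the ``if and only if'' in the definition of entire labels. I must verify both that a child's entire set may legitimately contain labels lying outside $X$ (those are precisely the labels made non-entire by the \emph{other} child, which the intersection removes) and that $X_1 \cap X_2 = X$ is both necessary and sufficient, as opposed to a union condition or a containment $X \subseteq X_i$. A brief remark will also handle infeasible entries: if either child term equals $-\infty$, the combination is infeasible and contributes nothing to the sum, which the maximum absorbs automatically.
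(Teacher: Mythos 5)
Your proposal is correct and follows essentially the same route as the paper's proof: decompose the happy set across the two disjoint children, use that disjointness makes both the happy-vertex count additive and the entire-label set equal to the intersection $X_1 \cap X_2$, and combine via the two standard inequalities. Your write-up simply makes explicit the details (the neighborhood-preservation argument, the vacuous entireness of labels absent from a child, and the handling of $-\infty$ entries) that the paper leaves implicit.
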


\begin{proof}
    At a union node, since $G_{t_1}$ and $G_{t_2}$ are disjoint,
     any maximum happy set in $G_t$ must be the disjoint union of
     some maximum happy set in $G_{t_1}$ and that in $G_{t_2}$ for the same target labels.
    We consider all possible combinations of partial solutions to $G_{t_1}$ and $G_{t_2}$,
     so the optimality is preserved.
    Note that a label in $G_t$ is entire if and only if it is entire in both $G_{t_1}$ and $G_{t_2}$.\looseness-1
\end{proof}

\begin{lemma}[Formula for relabel nodes]\label{lem:cw-relabel}
    Suppose $t$ is a relabel node with child node $t'$,
    where label $i$ in graph $G_{t'}$ is relabeled to $j$.
    Then, the following holds.
    \begin{align*}
        \phi[t, w, X, T] &= \begin{cases}
            -\infty &\text{if } i \notin X;\\
            \phi[t', w, X, T'] &\text{if } i \in X \text{ and } j \in X;\\
            \displaystyle\max_{Y \in \{\emptyset, \{i\}, \{j\}\}} \phi[t', w, X \setminus \{i\} \cup Y, T']
                &\text{if } i \in X \text{ and } j \notin X,\\
        \end{cases}
    \end{align*}
    where $T' = T \cup \{i\}$ if $j \in T$ and $T \setminus \{i\}$ otherwise.
\end{lemma}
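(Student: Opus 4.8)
The plan is to establish the formula by carefully tracking how relabeling affects the two quantities encoded in the DP state that are sensitive to label identity: the entire-label set $X$ and the target-label set $T$. The key observation is that relabeling $\rho(i,j)$ merges $V_i$ into $V_j$, so in $G_t$ there are no vertices with label $i$, and $V_j$ in $G_t$ equals $V_i \cup V_j$ from $G_{t'}$. The number and identity of happy vertices is unchanged by relabeling (relabeling adds no edges), so the only thing that changes is the \emph{bookkeeping} of which labels are entire and which count toward the target.

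\textbf{First} I would handle the entire-label condition. In $G_t$, label $j$ is entire if and only if every vertex currently labeled $j$---that is, every vertex of $V_i \cup V_j$ from $G_{t'}$---lies in $S$; equivalently both $V_i$ and $V_j$ were entire in $G_{t'}$. Also, label $i$ is empty in $G_t$ and hence vacuously entire, so we must always have $i \in X$; this justifies the $-\infty$ case when $i \notin X$. The two remaining cases split on whether $j \in X$. If $j \in X$ (so both $i,j$ entire in $G_t$, but $i$ is vacuous), the corresponding state in $G_{t'}$ requires both $V_i$ and $V_j$ entire, giving the single term $\phi[t', w, X, T']$. If $j \notin X$ (label $j$ is not entire in $G_t$), then in $G_{t'}$ at least one of $V_i, V_j$ was non-entire; the three choices $Y \in \{\emptyset, \{i\}, \{j\}\}$ enumerate which of $V_i, V_j$ were entire in $G_{t'}$ (neither, only $i$, only $j$)---but not both, since that would make $j$ entire in $G_t$---and we take the maximum, which explains the $X \setminus \{i\} \cup Y$ substitution.

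\textbf{Next} I would justify the target-label transformation $T' = T \cup \{i\}$ if $j \in T$ and $T' = T \setminus \{i\}$ otherwise. Since a happy vertex of $V_i$ in $G_{t'}$ becomes a happy vertex of $V_j$ in $G_t$, it should be counted toward target set $T$ in $G_t$ precisely when $j \in T$. Therefore, to count the same vertices via the child state, we must treat label $i$ as a target in $G_{t'}$ exactly when $j$ is a target in $G_t$ (and label $i$'s target status in $T$ is irrelevant, since $i$ is absent in $G_t$), yielding the stated rule for $T'$. Combining the entire-label accounting with this target translation shows each case of the formula counts exactly the happy vertices with target labels $T$ in $G_t$ under the prescribed entirety constraints.

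\textbf{The main obstacle} I anticipate is the case analysis on entire labels when $j \notin X$: one must argue that the three-way maximum over $Y$ is both sound (every configuration it ranges over genuinely yields label $j$ non-entire in $G_t$, so no invalid child states are admitted) and complete (it captures every optimal child configuration), while correctly excluding the fourth possibility where both $V_i$ and $V_j$ are entire. Getting the target translation to commute cleanly with this maximization---verifying that $T'$ depends only on $j$'s membership in $T$ and not on the branch of the analysis---is the subtle point that ties the argument together.
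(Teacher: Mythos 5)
Your proposal is correct and follows essentially the same argument as the paper's proof: the $-\infty$ case because the emptied label $i$ is vacuously entire, the translation $T'$ of target labels driven by whether $j\in T$, the single child state when $j\in X$, and the three-way maximum over which of $V_i,V_j$ were entire in $G_{t'}$ when $j\notin X$ (excluding the both-entire configuration). Your write-up is in fact slightly more explicit than the paper's about why happy vertices are preserved (relabeling adds no edges) and why the both-entire case must be excluded, but the underlying reasoning is identical.
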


\begin{proof}
    At a relabel node $\rho(i,j)$, label $i$ becomes empty, so it must be entire in $G_t$,
    leading to the first case.
    The variable $T'$ converts the target labels in $G_{t'}$ to those in $G_t$.
    If label $j$ is a target in $G_t$, then $i$ and $j$ must be targets in $G_{t'}$.
    Likewise, if label $j$ is not a target in $G_t$, then neither $i$ nor $j$ should be targets in $G_{t'}$.

    If label $j$ is entire in $G_t$,
     then the maximum happy set must be the same as the one in $G_{t'}$
     where both labels $i$ and $j$ are entire.
    If $j$ is not entire in $G_t$, then we need to choose the best solution from the following:
    $i$ is entire but $j$ is not, $j$ is entire but $i$ is not, neither $i$ nor $j$ is entire.
    Because $G_t$ and $G_{t'}$ have the same underlying graph, the optimal solution must be one of these.
\end{proof}

\begin{figure*}[t]
    \centering

    \pgfdeclarelayer{bg}
    \pgfsetlayers{bg, main}
    \usetikzlibrary{calc}

    \tikzset{
        old inner xsep/.estore in=\oldinnerxsep,
        old inner ysep/.estore in=\oldinnerysep,
        double circle/.style 2 args={
            circle,
            old inner xsep=\pgfkeysvalueof{/pgf/inner xsep},
            old inner ysep=\pgfkeysvalueof{/pgf/inner ysep},
            /pgf/inner xsep=\oldinnerxsep+#1,
            /pgf/inner ysep=\oldinnerysep+#1,
            alias=sourcenode,
            append after command={
            let     \p1 = (sourcenode.center),
                    \p2 = (sourcenode.east),
                    \n1 = {\x2-\x1-#1-0.5*\pgflinewidth}
            in
                node [inner sep=0pt, draw, circle, minimum width=2*\n1,at=(\p1),#2] {}
            }
        },
        double circle/.default={-3pt}{black}
    }
    \tikzstyle{large} = [circle, fill=white, text=black, draw, thick, scale=1, minimum size=0.8cm, inner sep=1.5pt]
    \tikzstyle{type1} = [double circle={-3pt}{black}, fill=black!25, text=black, draw, thick, scale=1, minimum size=0.8cm, inner sep=1.5pt]
    \tikzstyle{type2} = [double circle={-3pt}{black, dashed}, fill=white, text=black, draw, thick, scale=1, minimum size=0.8cm, inner sep=1.5pt]
    \tikzstyle{plain} = [circle, fill=white, text=black, draw, thick, scale=1, minimum size=0.3cm, inner sep=1.5pt]
    \tikzstyle{happy} = [double circle={-3.5pt}{line width=1.2pt}, fill=black!25, text=black, draw, thick, scale=1, minimum size=0.3cm, inner sep=1.5pt]
    \tikzstyle{unhappy} = [circle, fill=black!25, text=black, draw, thick, scale=1, minimum size=0.3cm, inner sep=1.5pt]

    \begin{minipage}[m]{.96\textwidth}
        \vspace{0pt}
        \centering
        \begin{tikzpicture}
            \node[happy] (a1) at (5, 12) {};
            \node[unhappy] (b1) at (5, 11) {};
            \node[plain] (c1) at (5, 10) {};
            \node[happy] (d1) at (6.5, 12) {};
            \node[happy] (e1) at (6.5, 11) {};
            \node[plain] (f1) at (6.5, 10) {};

            \node[unhappy] (a2) at (9, 12) {};
            \node[unhappy] (b2) at (9, 11) {};
            \node[happy] (c2) at (9, 10) {};
            \node[plain] (d2) at (10.5, 12) {};
            \node[plain] (e2) at (10.5, 11) {};
            \node[happy] (f2) at (10.5, 10) {};

            \node[unhappy] (a3) at (13, 12) {};
            \node[plain] (b3) at (13, 11) {};
            \node[plain] (c3) at (13, 10) {};
            \node[happy] (d3) at (14.5, 12) {};
            \node[unhappy] (e3) at (14.5, 11) {};
            \node[unhappy] (f3) at (14.5, 10) {};

            \node[happy] (a4) at (5, 8) {};
            \node[unhappy] (b4) at (5, 7) {};
            \node[plain] (c4) at (5, 6) {};
            \node[happy] (d4) at (6.5, 8) {};
            \node[happy] (e4) at (6.5, 7) {};
            \node[plain] (f4) at (6.5, 6) {};

            \draw[] (a1) -- (b1);
            \draw[] (b1) -- (c1);
            \draw[] (a1) -- (d1);
            \draw[] (b1) -- (e1);
            \draw[] (b1) -- (f1);
            \draw[] (c1) -- (f1);
            \draw[] (d1) -- (e1);

            \draw[] (a2) -- (b2);
            \draw[] (b2) -- (c2);
            \draw[] (a2) -- (d2);
            \draw[] (b2) -- (e2);
            \draw[] (b2) -- (f2);
            \draw[] (c2) -- (f2);
            \draw[] (d2) -- (e2);

            \draw[] (a3) -- (b3);
            \draw[] (b3) -- (c3);
            \draw[] (a3) -- (d3);
            \draw[] (b3) -- (e3);
            \draw[] (b3) -- (f3);
            \draw[] (c3) -- (f3);
            \draw[] (d3) -- (e3);

            \draw[] (a4) -- (b4);
            \draw[] (b4) -- (c4);
            \draw[] (a4) -- (d4);
            \draw[] (b4) -- (e4);
            \draw[] (b4) -- (f4);
            \draw[] (c4) -- (f4);
            \draw[] (d4) -- (e4);

            \node at (3, 11) {$G_{t'}$};
            \node at (3.5, 9) {$\rho(i,j)$};
            \node at (3, 7) {$G_t$};
            \draw[-{Latex[length=2mm, width=1.5mm]}] (3, 10.5)--(3, 7.5);
            \draw[-{Latex[length=2mm, width=1.5mm]}] (5.75, 9.4)--(5.75, 8.6);

            \draw[rounded corners, dashed] (4.6, 9.6) rectangle ++ (0.8, 2.8);
            \draw[rounded corners, dashed] (6.1, 9.6) rectangle ++ (0.8, 2.8);
            \draw[] (4.0, 9.4) rectangle ++ (3.5, 3.2);
            \node at (4.3, 12.2) {$i$};
            \node at (7.2, 12.2) {$j$};
            \node at (5.8, 13) {$\phi[t', 4, \emptyset, \{i,j\}] = \textbf{3}$};

            \draw[rounded corners, dashed] (8.6, 9.6) rectangle ++ (0.8, 2.8);
            \draw[rounded corners, dashed] (10.1, 9.6) rectangle ++ (0.8, 2.8);
            \draw[] (8.0, 9.4) rectangle ++ (3.5, 3.2);
            \node at (8.3, 12.2) {$i$};
            \node at (11.2, 12.2) {$j$};
            \node at (9.8, 13) {$\phi[t', 4, \{i\}, \{i,j\}] = 2$};

            \draw[rounded corners, dashed] (12.6, 9.6) rectangle ++ (0.8, 2.8);
            \draw[rounded corners, dashed] (14.1, 9.6) rectangle ++ (0.8, 2.8);
            \draw[] (12.0, 9.4) rectangle ++ (3.5, 3.2);
            \node at (12.3, 12.2) {$i$};
            \node at (15.2, 12.2) {$j$};
            \node at (13.8, 13) {$\phi[t', 4, \{j\}, \{i,j\}] = 1$};

            \draw[rounded corners, dashed] (4.6, 5.6) rectangle ++ (2.3, 2.8);
            \node at (7.2, 8.2) {$j$};
            \draw[] (4.0, 5.4) rectangle ++ (3.5, 3.2);

            \node at (9, 7) {$\phi[t, 4, \{i\}, \{j\}] = 3$};
            \node[align=left] at (6.7, 5.2) {entire: $i$};

            \node[align=left] at (6.7, 9.2) {entire: $\emptyset$};
            \node[align=left] at (10.7, 9.2) {entire: $i$};
            \node[align=left] at (14.7, 9.2) {entire: $j$};
        \end{tikzpicture}
    \end{minipage}
    \caption{%
    Visualization of the relabel operation $\rho(i,j)$ in the $\cw$-expression tree of labels $i,j$.
    Figures show happy sets of size $4$ (shaded in gray) maximizing the number of happy vertices
    (shown with double circles).
    After relabeling $i$ to $j$, label $i$ becomes empty and thus entire.
    Since label $j$ in $G_t$ corresponds to labels $i$ and $j$ in $G_{t'}$,
    to compute $\phi[t,4,\{i\},\{j\}]$,
    we need to look up three partial solutions
     $\phi[t',4,\emptyset,T']$, $\phi[t',4,\{i\},T']$, and $\phi[t',4,\{j\},T']$,
     where $T'=\{i,j\}$,
     and keep the one with the largest value (the left one in this example).
    }
    \label{fig:alg-hs-cw-relabel}
\end{figure*}

\cref{fig:alg-hs-cw-relabel} illustrates the third case of the formula in \cref{lem:cw-relabel}.

\begin{lemma}[Formula for join nodes]\label{lem:cw-join}
    Suppose $t$ is a join node with the child node $t'$, where labels $i$ and $j$ are joined.
    Then, the following holds.
    \begin{align*}
        \phi[t, w, X, T] &= \begin{cases}
            \phi[t', w, X, T] &\text{if } i \in X \text{ and } j \in X\\
            \phi[t', w, X, T \setminus \{i\}] &\text{if } i \in X \text{ and } j \notin X\\
            \phi[t', w, X, T \setminus \{j\}] &\text{if } i \notin X \text{ and } j \in X\\
            \phi[t', w, X, T \setminus \{i,j\}] &\text{if } i \notin X \text{ and } j \notin X\\
        \end{cases}
    \end{align*}
\end{lemma}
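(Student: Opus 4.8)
The plan is to exploit the fact that a join node $\eta(i,j)$ modifies neither the vertex set nor the happy set $S$: it only inserts the complete bipartite edge set between $V_i$ and $V_j$. In particular the size $w$ of the happy set and its entire-pattern $X$ (which labels are entire depends only on $S$ and the $V_\ell$, not on the edges) are identical at $t$ and at $t'$. Hence the two tables $\phi[t,\cdot]$ and $\phi[t',\cdot]$ quantify over exactly the same family of admissible happy sets, and the only quantity that can differ between $G_t$ and $G_{t'}$ is the set of happy vertices. First I would fix one admissible $S$ (of size $w$, with $V_\ell$ entire iff $\ell \in X$) and compare happiness vertex-by-vertex.

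Next I would argue that adding edges can only destroy happiness, never create it, and that only vertices of $V_i \cup V_j$ can be affected, since every other vertex has an identical closed neighborhood in $G_{t'}$ and in $G_t$. Using Proposition~\ref{prop:unhappy} (equivalently, the reasoning behind \cref{lem:entire-unhappy}), after the join each vertex of $V_i$ is adjacent to all of $V_j$, so it can remain happy only if $V_j \subseteq S$, i.e. $j \in X$; symmetrically a vertex of $V_j$ can stay happy only if $i \in X$. Conversely, when the relevant label is entire the new adjacencies introduce no neighbor outside $S$, so happiness is preserved exactly.

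This yields the four cases directly: happy vertices with labels in $T \setminus \{i,j\}$ are unchanged, label $i$ contributes its $G_{t'}$-count when $j \in X$ and nothing otherwise, and label $j$ contributes its $G_{t'}$-count when $i \in X$ and nothing otherwise. Therefore counting the happy vertices of $G_t$ with labels in $T$ equals counting the happy vertices of $G_{t'}$ with labels in $T$, $T \setminus \{i\}$, $T \setminus \{j\}$, or $T \setminus \{i,j\}$, according to which of $i,j$ lie in $X$. Since this identity holds for every admissible $S$, taking the maximum over $S$ transfers it to the $\phi$-values and establishes the formula.

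The main obstacle I anticipate is bookkeeping rather than depth: I must be careful that $T$ records precisely the labels whose happy vertices are counted, and that the entire-pattern $X$ is genuinely invariant under the join, so that the two tables range over the same happy sets. Once the per-$S$, vertex-wise comparison is pinned down, the case split is immediate, and no optimization argument beyond the observation that a pointwise equality passes to the maximum is required.
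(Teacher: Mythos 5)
Your proposal is correct and follows essentially the same route as the paper's proof: both observe that vertices labeled outside $\{i,j\}$ are unaffected, invoke the entire-subgraph argument (Lemma~\ref{lem:entire-unhappy}) to show that $V_i$ loses all happy vertices exactly when $j \notin X$ (and symmetrically for $V_j$), and conclude that dropping the corresponding labels from $T$ yields the four cases. Your version merely makes explicit two points the paper leaves implicit --- that the join preserves the family of admissible happy sets for each $(w,X)$, and that the per-$S$ equality passes to the maximum --- which is a welcome elaboration, not a different argument.
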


\begin{proof}
    At a join node $\eta(i,j)$, first observe that for any happy set,
    the vertices labeled other than $i,j$ are unaffected;
    happy vertices remain happy.
   Further, if label $j$ is not entire in $G_t$,
    then all vertices in $V_i$ cannot be happy from \cref{lem:entire-unhappy}.
   Thus, the maximum happy set in $G_t$ is equivalent to
    the one in $G_{t'}$ such that label $i$ is not a target label.
   The same argument applies to the other cases.
\end{proof}

\begin{figure*}[t]
    \centering

    \pgfdeclarelayer{bg}
    \pgfsetlayers{bg, main}
    \usetikzlibrary{calc}

    \tikzset{
        old inner xsep/.estore in=\oldinnerxsep,
        old inner ysep/.estore in=\oldinnerysep,
        double circle/.style 2 args={
            circle,
            old inner xsep=\pgfkeysvalueof{/pgf/inner xsep},
            old inner ysep=\pgfkeysvalueof{/pgf/inner ysep},
            /pgf/inner xsep=\oldinnerxsep+#1,
            /pgf/inner ysep=\oldinnerysep+#1,
            alias=sourcenode,
            append after command={
            let     \p1 = (sourcenode.center),
                    \p2 = (sourcenode.east),
                    \n1 = {\x2-\x1-#1-0.5*\pgflinewidth}
            in
                node [inner sep=0pt, draw, circle, minimum width=2*\n1,at=(\p1),#2] {}
            }
        },
        double circle/.default={-3pt}{black}
    }
    \tikzstyle{large} = [circle, fill=white, text=black, draw, thick, scale=1, minimum size=0.8cm, inner sep=1.5pt]
    \tikzstyle{type1} = [double circle={-3pt}{black}, fill=black!25, text=black, draw, thick, scale=1, minimum size=0.8cm, inner sep=1.5pt]
    \tikzstyle{type2} = [double circle={-3pt}{black, dashed}, fill=white, text=black, draw, thick, scale=1, minimum size=0.8cm, inner sep=1.5pt]
    \tikzstyle{plain} = [circle, fill=white, text=black, draw, thick, scale=1, minimum size=0.3cm, inner sep=1.5pt]
    \tikzstyle{happy} = [double circle={-3.5pt}{line width=1.2pt}, fill=black!25, text=black, draw, thick, scale=1, minimum size=0.3cm, inner sep=1.5pt]
    \tikzstyle{unhappy} = [circle, fill=black!25, text=black, draw, thick, scale=1, minimum size=0.3cm, inner sep=1.5pt]

    \begin{minipage}[m]{.96\textwidth}
        \vspace{0pt}
        \centering
        \begin{tikzpicture}
            \node[happy] (a1) at (5, 12) {};
            \node[happy] (b1) at (5, 11) {};
            \node[happy] (c1) at (5, 10) {};
            \node[happy] (d1) at (6.5, 12) {};
            \node[plain] (e1) at (6.5, 11) {};
            \node[unhappy] (f1) at (6.5, 10) {};

            \node[unhappy] (a2) at (9, 12) {};
            \node[unhappy] (b2) at (9, 11) {};
            \node[happy] (c2) at (9, 10) {};
            \node[plain] (d2) at (10.5, 12) {};
            \node[happy] (e2) at (10.5, 11) {};
            \node[happy] (f2) at (10.5, 10) {};

            \node[unhappy] (a4) at (9, 8) {};
            \node[unhappy] (b4) at (9, 7) {};
            \node[unhappy] (c4) at (9, 6) {};
            \node[plain] (d4) at (10.5, 8) {};
            \node[happy] (e4) at (10.5, 7) {};
            \node[happy] (f4) at (10.5, 6) {};

            \draw[] (a1) -- (b1);
            \draw[] (b1) -- (c1);
            \draw[] (a1) -- (d1);
            \draw[] (b1) -- (d1);
            \draw[] (c1) -- (f1);
            \draw[] (e1) -- (f1);
            \draw[] (c1) -- (f1);

            \draw[] (a2) -- (b2);
            \draw[] (b2) -- (c2);
            \draw[] (a2) -- (d2);
            \draw[] (b2) -- (d2);
            \draw[] (c2) -- (f2);
            \draw[] (e2) -- (f2);
            \draw[] (c2) -- (f2);

            \draw[] (a4) -- (b4);
            \draw[] (b4) -- (c4);
            \draw[] (a4) -- (d4);
            \draw[] (b4) -- (d4);
            \draw[] (c4) -- (d4);
            \draw[] (e4) -- (f4);
            \draw[] (a4) -- (e4);
            \draw[] (b4) -- (e4);
            \draw[] (c4) -- (e4);
            \draw[] (a4) -- (f4);
            \draw[] (b4) -- (f4);
            \draw[] (c4) -- (f4);

            \node at (3, 11) {$G_{t'}$};
            \node at (3.5, 9) {$\eta(i,j)$};
            \node at (3, 7) {$G_t$};
            \draw[-{Latex[length=2mm, width=1.5mm]}] (3, 10.5)--(3, 7.5);
            \draw[-{Latex[length=2mm, width=1.5mm]}] (9.75, 9.4)--(9.75, 8.6);

            \draw[rounded corners, dashed] (4.6, 9.6) rectangle ++ (0.8, 2.8);
            \draw[rounded corners, dashed] (6.1, 9.6) rectangle ++ (0.8, 2.8);
            \draw[] (4.0, 9.4) rectangle ++ (3.5, 3.2);
            \node at (4.3, 12.2) {$i$};
            \node at (7.2, 12.2) {$j$};
            \node at (5.8, 13) {$\phi[t', 5, \{i\}, \{i,j\}] = 4$};

            \draw[rounded corners, dashed] (8.6, 9.6) rectangle ++ (0.8, 2.8);
            \draw[rounded corners, dashed] (10.1, 9.6) rectangle ++ (0.8, 2.8);
            \draw[] (8.0, 9.4) rectangle ++ (3.5, 3.2);
            \node at (8.3, 12.2) {$i$};
            \node at (11.2, 12.2) {$j$};
            \node at (9.8, 13) {$\phi[t', 5, \{i\}, \{j\}] = 2$};

            \draw[rounded corners, dashed] (8.6, 5.6) rectangle ++ (0.8, 2.8);
            \draw[rounded corners, dashed] (10.1, 5.6) rectangle ++ (0.8, 2.8);
            \node at (8.3, 8.2) {$i$};
            \node at (11.2, 8.2) {$j$};
            \draw[] (8.0, 5.4) rectangle ++ (3.5, 3.2);

            \node at (13.2, 7) {$\phi[t, 5, \{i\}, \{i,j\}] = 2$};
            \node[align=left] at (9.75, 5.2) {entire: $i$, target: $i,j$};

            \node[align=left] at (5.75, 9.2) {entire: $i$, target: $i,j$};
            \node[align=left] at (9.75, 9.2) {entire: $i$, target: $j$};
        \end{tikzpicture}
    \end{minipage}
    \caption{%
    Visualization of the join operation $\eta(i,j)$ in the $\cw$-expression tree of labels $i,j$.
    Figures show happy sets of size $5$ (shaded in gray) maximizing the number of happy vertices
    (shown with double circles) for different target labels.
    We consider the case where label $i$ is entire and $j$ is not.
    The graph $G_{t'}$ admits $4$ happy vertices if both $i$ and $j$ are target labels.
    However, this is no longer true after the join because label $j$ is not entire.
    Instead, the optimal happy set for $G_t$ can be found where
     label $i$ is excluded from the target labels for $G_{t'}$, i.e. $\phi[t',5,\{i\},\{j\}]$,
     which admits $2$ happy vertices with label $j$.
    Notice that we do not count the happy vertices with label $i$ if it is not a target.
    }
    \label{fig:alg-hs-cw-join}
\end{figure*}

\cref{fig:alg-hs-cw-join} illustrates the second case of the formula in \cref{lem:cw-join}.
Lastly, we examine the running time of these computations.

\begin{proposition}
    Given a $\cw$-expression tree and its node $t$,
     and partial solutions $\phi[t',\cdot,\cdot,\cdot]$ for every child node $t'$ of $t$,
     we can compute $\phi[t, w, X, T]$ for every $w,X,T$ in time $\bigo{8^{\cw}\cdot k^2}$.
\end{proposition}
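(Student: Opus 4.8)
The plan is to analyze the four node types separately and show that the union node dominates the running time; correctness of the recurrences is already supplied by the preceding lemmas, so this is purely a timing analysis. At any node $t$ the DP table has $(k+1)\cdot 2^{\cw}\cdot 2^{\cw} = \bigo{k\cdot 4^{\cw}}$ entries, indexed by the happy-set size $w\in\{0,\ldots,k\}$ and the label sets $X,T\subseteq L$. For introduce, relabel, and join nodes, each entry is obtained from a constant number of lookups into the already-computed child tables: the introduce formula and \cref{lem:cw-join} give each value by a single lookup, while the third case of \cref{lem:cw-relabel} takes a maximum over three lookups. I would therefore bound the cost of these three node types by $\bigo{k\cdot 4^{\cw}}$, which is well inside the claimed bound.

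The union node is the bottleneck, and handling it efficiently is the crux. The formula for union nodes maximizes, over a split point $\tilde w$ and over pairs $(X_1,X_2)$ with $X_1\cap X_2 = X$, the sum $\phi[t_1,\tilde w,X_1,T]+\phi[t_2,w-\tilde w,X_2,T]$. Rather than, for each target triple $(w,X,T)$, enumerating the pairs whose intersection equals $X$, I would sweep once over all $4^{\cw}$ ordered pairs $(X_1,X_2)\in 2^{L}\times 2^{L}$, compute $X=X_1\cap X_2$, and route each combined value to the entry indexed by that $X$. The decisive accounting observation is that the pairs with a fixed intersection $X$ number exactly $3^{\cw-\abs{X}}$ (each label outside $X$ lies in $X_1$ only, $X_2$ only, or neither), and summing over all $X$ gives $\sum_{x=0}^{\cw}\binom{\cw}{x}3^{\cw-x}=4^{\cw}$ by the binomial theorem. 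Hence the total number of pairs processed is $4^{\cw}$, not the $2^{\cw}\cdot 4^{\cw}$ one would get by bounding the pairs per $X$ and the number of targets $X$ independently.

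Multiplying the factors, the union computation costs $\bigo{4^{\cw}}$ for the pair sweep, $\bigo{2^{\cw}}$ for the choices of the target set $T$, and $\bigo{k^2}$ for the index pairs $(\tilde w,w)$ with $0\le\tilde w\le w\le k$, for a total of $\bigo{8^{\cw}\cdot k^2}$; taking the maximum over the four node types gives the claimed running time. The step I expect to be the main obstacle is precisely this union accounting: a naive analysis that treats the choice of intersection $X$ and the number of pairs realizing it as independent inflates the exponential factor (to $12^{\cw}$ or $16^{\cw}$), and it is the binomial collapse to $4^{\cw}$, together with the single sweep over pairs, that pins the base down to $8$.
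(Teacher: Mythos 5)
Your proposal is correct and follows essentially the same route as the paper: the non-union nodes are dispatched with constant-time lookups per entry, and the union node is handled by enumerating over $(w,\tilde w,T,X_1,X_2)$ and deriving $X=X_1\cap X_2$ rather than enumerating $X$ separately, giving $\bigo{8^{\cw}\cdot k^2}$. Your additional binomial-theorem accounting of the $3^{\cw-\abs{X}}$ pairs per intersection is a nice sanity check but is not needed beyond the direct count of $4^{\cw}$ ordered pairs.
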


\begin{proof}
    It is clear to see that for fixed $t, w, X, T$, the formulae for introduce, relabel, and join nodes
     take $\bigo{1}$.
    If we compute $\phi[t, \cdot,\cdot,\cdot]$ for every $w, X, T$, the total running time is
     $\bigo{\left(2^{\cw}\right)^2\cdot k}$ since $w$ is bounded by $k$ and there are $2^{\cw}$ configurations for $X$ and $T$.

    For the union node formula, observe that $X$ can be determined by the choice of $X_1$ and $X_2$,
     so it is enough to consider all possible values for $w,T,\tilde{w},X_1,X_2$,
     which results in the running time $\bigo{\left(2^{\cw}\right)^3\cdot k^2}$,
     or $\bigo{8^{\cw}\cdot k^2}$.
\end{proof}

This completes the proof of \cref{thm:hs-cw}, as we process each node of the $\cw$-expression tree once,
 and it has $\bigo{\abs{V(G)}}$ nodes.

\section{Algorithms for Maximum Edge Happy Set}\label{sec:algorithm2}

In addition to \PrbMaxHS, we also study its edge-variant \PrbMaxEHS.
One difference from \PrbMaxHS is that when joining two subgraphs,
 we may increase the number of edges between those subgraphs,
 even if they are not entire.
In other words, the number of edges between joining subgraphs depends on two variables,
 and quadratic programming naturally takes part in this setting.
Here, we present FPT algorithms for two parameters---%
 neighborhood diversity and cluster deletion number---%
 to figure out the boundary between parameters that are FPT
 (e.g. treewidth) and W[1]-hard (e.g. clique-width) (see \cref{tab:result}).
%

\subsection{Parameterized by neighborhood diversity}

As shown in \cref{fig:params}, neighborhood diversity is a parameter specializing modular-width.
To obtain a finer classification of structural parameters,
 we now show \PrbMaxEHS is FPT parameterized by neighborhood diversity.

Let $\nd$ be the neighborhood diversity of the given graph $G$.
We observe that any instance $(G, k)$ of \PrbMaxEHS can be reduced to the instance of \PrbIQPLong (\PrbIQP)
 as follows.

\begin{lemma}\label{lem:ehs-nd-reduction}
    \PrbMaxEHS can be reduced to \PrbIQP with $\bigo{\nd}$ variables and bounded coefficients
    in time $\bigo{\abs{V(G)}+\abs{E(G)}}$.
\end{lemma}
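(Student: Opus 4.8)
The plan is to exploit the structure of neighborhood diversity: the vertex set $V(G)$ partitions into $\nd$ modules $M_1,\ldots,M_{\nd}$, each of which is either a clique or an independent set, and any two modules $M_a, M_b$ are connected either by all possible edges or by none. I would introduce an integer variable $x_a$ for each module $M_a$, representing the number of vertices selected from $M_a$ into the happy set $S$. The constraints are then immediate: $0 \leq x_a \leq |M_a|$ for each $a$ (a box constraint, encodable as two rows of $A x \leq b$), together with the cardinality constraint $\sum_{a} x_a = k$ (encoded as two inequalities $\sum_a x_a \leq k$ and $-\sum_a x_a \leq -k$). This gives $\bigo{\nd}$ variables and $\bigo{\nd}$ constraint rows.

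The core step is expressing the number of happy edges $m(G[S])$ as a quadratic form in the $x_a$. I would split the count into intra-module and inter-module contributions. For inter-module edges between $M_a$ and $M_b$ (with $a \neq b$) that are \emph{fully} joined, every selected vertex in $M_a$ is adjacent to every selected vertex in $M_b$, contributing exactly $x_a x_b$ edges; non-joined pairs contribute $0$. For intra-module edges, a clique module $M_a$ contributes $\binom{x_a}{2} = \tfrac{1}{2}x_a^2 - \tfrac{1}{2}x_a$, while an independent module contributes $0$. Thus the objective is $\sum_{\{a,b\}\text{ joined}} x_a x_b + \sum_{a\text{ clique}} \tfrac{1}{2}(x_a^2 - x_a)$, which is exactly a form $x^T Q x + q^T x$ with entries in $\{0, \tfrac12, 1\}$ for $Q$ and $\{0,-\tfrac12\}$ for $q$; scaling by $2$ clears the fractions and keeps every coefficient bounded by a small absolute constant. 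Since \PrbMaxEHS maximizes while \PrbIQP minimizes, I negate $Q$ and $q$. By Corollary~\ref{col:iqp}, the quadratic-plus-linear objective is admissible and the coefficient bound $\alpha$ is an absolute constant independent of the input, as required.

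I would close by verifying the running time: reading off the module partition (i.e.\ computing the twin classes and whether each is a clique or independent set, and which pairs are adjacent) can be done in $\bigo{|V(G)| + |E(G)|}$ time, and writing down the $\bigo{\nd}\times\bigo{\nd}$ matrix $Q$ together with the $\bigo{\nd}$ constraint rows is subsumed by this bound. The main obstacle, and the point I would take care to argue cleanly, is the correctness of the quadratic encoding rather than its size: I must confirm that no edge is double-counted (handled by summing over unordered pairs $\{a,b\}$ for the cross terms and using $\binom{x_a}{2}$ rather than $x_a^2$ within a module) and that the bijection between feasible IQP solutions and size-$k$ vertex subsets respecting the module structure is genuinely value-preserving. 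A subtle point worth a sentence is that a solution of \PrbMaxEHS is free to pick \emph{which} $x_a$ vertices inside a module, but since all vertices of a module are interchangeable (true twins in the clique case, false twins in the independent case), the number of happy edges depends only on the counts $x_a$, so the reduction loses no information and the optimum of the IQP equals the optimum of \PrbMaxEHS.
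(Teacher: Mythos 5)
Your proposal is correct and follows essentially the same route as the paper: one integer variable per twin-class module, box plus cardinality constraints, intra-module edges counted as $\binom{x_a}{2}$ for clique modules and inter-module edges as $x_a x_b$ for joined pairs, doubled to clear fractions and negated for minimization, with Corollary~\ref{col:iqp} handling the linear term. If anything, your explicit $\sum_{a\ \text{clique}}\tfrac12(x_a^2-x_a)$ form writes the clique contribution as a diagonal quadratic term, which is the cleaner (and intended) reading of the paper's $q$-based matrix expression.
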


\begin{proof}
    First, we compute the set of twins (modules) $\mathcal{M} = M_1, \ldots, M_{\nd}$ of $G$,
     and obtain the quotient graph $H$
     on the modules $\mathcal{M}$ in time $\bigo{\abs{V(G)}+\abs{E(G)}}$ \cite{lampis_algorithmic_2012}.
    Note that each module $M_i$ is either a clique or an independent set.
    Let us define a vector $q \in \Z^{\nd}$ such that
    $q_i = 1$ if $M_i$ is a clique, and $q_i=0$ if $M_i$ is an independent set.
    Further, let $A \in \Z^{\nd \times \nd}$ be the adjacency matrix of $H$
     where $A_{ij}=1$ if $M_i M_j \in E(H)$, and $A_{ij}=0$ otherwise.
    
    \br
    We then formulate an \PrbIQP instance as follows:

    \begin{itemize}
        \item Variables: $x \in \Z^\nd$.
        \item Maximize: $f(x) = x^T (A + qq^T) x  - q^T x$ (equivalently, minimize $-f(x)$).
        \item Subject to: $\sum_i x_i = k$ and
        $0 \leq x_i \leq |M_i|$ for every $1 \leq i \leq \nd$.
    \end{itemize}

    This formulation has $\nd$ variables,
    and its coefficients are either $0$ or $\pm 1$, thus bounded.
    After finding the optimal vector $x$, pick any $x_i$ vertices from module $M_i$
    and include them in the happy set $S$.
    We claim that $S$ maximizes the number of happy edges.

    For any happy set $S$, the number of happy edges is given by the sum of
     the number of happy edges inside each module $M_i$, which we call \textit{internal edges},
     and the number of edges between each module pair $M_i$ and $M_j$, or \textit{external edges}.
    Let $x \in \Z^{\nd}$ be a vector such that $x_i = \abs{S \cap M_i}$ for every $i$.
    Then, the number of internal edges of module $M_i$ is $q_i \cdot \binom{x_i}{2}$,
    and the number of external edges between modules $M_i$ and $M_j$ is $A_{ij} \cdot x_i x_j$.
    The number of happy edges, i.e. $\abs{E(G[S])}$, is given by:
    $h(x) = \left[\sum_{i=1}^{\nd} q_i \cdot \binom{x_i}{2}\right] 
    + \left[\sum_{1 \leq i < j \leq \nd} A_{ij}\cdot x_i x_j\right]$.
    One can trivially verify $f(x) = 2h(x)$.

    If the \PrbIQP instance is feasible, then we can find a happy set $S$ of size $k$ maximizing $h(x)$,
     which must be the optimal solution to \PrbMaxEHS.
    Otherwise, $\sum_{i} \abs{M_i} = \abs{V(G)} < k$, and \PrbMaxEHS is also infeasible.
\end{proof}

\begin{figure}[t]
    \centering
    \pgfdeclarelayer{bg}
    \pgfsetlayers{bg, main}
    \usetikzlibrary{calc}

    \tikzset{
        old inner xsep/.estore in=\oldinnerxsep,
        old inner ysep/.estore in=\oldinnerysep,
        double circle/.style 2 args={
            circle,
            old inner xsep=\pgfkeysvalueof{/pgf/inner xsep},
            old inner ysep=\pgfkeysvalueof{/pgf/inner ysep},
            /pgf/inner xsep=\oldinnerxsep+#1,
            /pgf/inner ysep=\oldinnerysep+#1,
            alias=sourcenode,
            append after command={
            let     \p1 = (sourcenode.center),
                    \p2 = (sourcenode.east),
                    \n1 = {\x2-\x1-#1-0.5*\pgflinewidth}
            in
                node [inner sep=0pt, draw, circle, minimum width=2*\n1,at=(\p1),#2] {}
            }
        },
        double circle/.default={-3pt}{black}
    }
    \tikzstyle{small} = [circle, fill=white, text=black, draw, thick, scale=1, minimum size=0.2cm, inner sep=1.5pt]
    \tikzstyle{happy} = [circle, fill=black!25, text=black, draw, thick, scale=1, minimum size=0.2cm, inner sep=1.5pt]
    \tikzstyle{large} = [circle, fill=white, text=black, draw, thick, scale=1, minimum size=1.2cm, inner sep=1.5pt]
    \begin{minipage}{0.96\textwidth}
        \centering
        \footnotesize
        \begin{tikzpicture}
            \node[large] (m1) at (0, 2) {};
            \node[large] (m2) at (0, 0) {};
            \node[large] (m3) at (2, 2) {};
            \node[large] (m4) at (2, 0) {};
            \node[large] (m5) at (4, 2) {};
   
            \node[happy] (v11) at (-0.2, 1.8) {};
            \node[happy] (v12) at (-0.2, 2.2) {};
            \node[happy] (v13) at (0.2, 1.8) {};
            \node[happy] (v14) at (0.2, 2.2) {};

            \node[happy] (v21) at (0, 0.2) {};
            \node[happy] (v22) at (-0.2, -0.15) {};
            \node[happy] (v23) at (0.2, -0.15) {};
            
            \node[happy] (v31) at (1.8, 1.8) {};
            \node[happy] (v32) at (1.8, 2.2) {};
            \node[small] (v33) at (2.2, 1.8) {};
            \node[small] (v34) at (2.2, 2.2) {};

            \node[happy] (v41) at (2, 0.2) {};
            \node[small] (v42) at (2, -0.2) {};

            \node[small] (v51) at (4, 2.2) {};
            \node[small] (v52) at (3.8, 1.85) {};
            \node[small] (v53) at (4.2, 1.85) {};
            
            \draw[very thick] (m1) -- (m2);
            \draw[very thick] (m1) -- (m3);
            \draw[very thick] (m3) -- (m4);
            \draw[very thick] (m2) -- (m4);
            \draw[very thick] (m3) -- (m5);

            \draw (v11) -- (v12);
            \draw (v11) -- (v13);
            \draw (v11) -- (v14);
            \draw (v12) -- (v13);
            \draw (v12) -- (v14);
            \draw (v13) -- (v14);

            \draw (v21) -- (v22);
            \draw (v21) -- (v23);
            \draw (v22) -- (v23);

            \draw (v51) -- (v52);
            \draw (v51) -- (v53);
            \draw (v52) -- (v53);

            \node at (-0.8, 2.4) {$M_1$};
            \node at (-0.8, 0.4) {$M_2$};

            \node at (1.2, 2.4) {$M_3$};
            \node at (1.2, 0.4) {$M_4$};
            
            \node at (3.2, 2.4) {$M_5$};
            
            \node at (0.25, 1) {$12$};
            \node at (2.25, 1) {$2$};
            \node at (1, -0.2) {$3$};
            \node at (1, 1.8) {$8$};
            \node at (3, 1.8) {$0$};

            \node at (0, 2.8) {$(6)$};
            \node at (0, -0.8) {$(3)$};
            \node at (2, -0.8) {$(0)$};
            \node at (2, 2.8) {$(0)$};
            \node at (4, 2.8) {$(0)$};

            \node at (6, 0) {$
            q = \begin{bmatrix}
                1\\1\\0\\0\\1
            \end{bmatrix},
            A = \begin{bmatrix}
                0 & 1 & 1 & 0 & 0\\
                1 & 0 & 0 & 1 & 0\\
                1 & 0 & 0 & 1 & 1\\
                0 & 1 & 1 & 0 & 0\\
                0 & 0 & 1 & 0 & 0\\
                \end{bmatrix}$};
        \end{tikzpicture}
       \end{minipage}%

    \caption{%
    Example instance of \PrbMaxEHS with $\nd=5$.
    The figure shows the quotient graph $H$ of the given graph $G$ on its modules $M_1,\ldots,M_5$.
    Every edge in $H$ forms a biclique in $G$.
    The maximum edge happy set for $k=10$ are shaded in gray.
    It also shows the number of internal edges for each module (e.g. $(6)$ for $M_1$),
    and that of external edges between modules (e.g. $12$ between $M_1$ and $M_2$).
    Vector $q$ indicates if each module is a clique or an independent set
    (e.g. $q_1=1$ because $M_1$ is a clique),
    and $A$ is the adjacency matrix of the quotient graph.
    }
    \label{fig:alg-ehs-nd}
\end{figure} 

\cref{fig:alg-ehs-nd} exemplifies a quotient graph of a graph with $\nd=5$,
 along with vector $q$ and matrix $A$.
The following is a direct result from \cref{lem:ehs-nd-reduction} and \cref{prop:iqp}.

\begin{theorem}
    \PrbMaxEHS can be solved in time $f(\nd) \cdot \abs{V(G)}^{\mathcal{O}(1)}$,
    where $\nd$ is the neighborhood diversity of the input graph $G$
    and $f$ is a computable function.
\end{theorem}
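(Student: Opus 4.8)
The plan is to compose the reduction of \cref{lem:ehs-nd-reduction} with the \PrbIQP solver of \cref{prop:iqp}. First I would invoke \cref{lem:ehs-nd-reduction} to transform the input instance $(G,k)$ into an \PrbIQP instance in time $\bigo{\abs{V(G)}+\abs{E(G)}}$. By that lemma the resulting program has only $\bigo{\nd}$ variables, and every coefficient of the quadratic matrix $A + qq^T$ and of the linear vector $q$ lies in $\{0, \pm 1\}$, so the parameter $\alpha$ bounding the largest absolute entry is a constant. Because the objective $x^T(A+qq^T)x - q^T x$ carries a linear term, I would apply \cref{col:iqp} rather than \cref{prop:iqp} directly; the linear-term extension costs only one additional variable and preserves the bound on $\alpha$.

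Next I would bound the encoding length $L$ of the \PrbIQP instance. The only quantities that are not already $\bigo{1}$ are the box constraints $0 \leq x_i \leq \abs{M_i}$ and the equality $\sum_i x_i = k$; since $\abs{M_i} \leq \abs{V(G)}$ and we may assume $k \leq \abs{V(G)}$ (otherwise the instance is trivially infeasible, as handled in \cref{lem:ehs-nd-reduction}), each of these requires $\bigo{\log \abs{V(G)}}$ bits, giving $L = \abs{V(G)}^{\bigo{1}}$. Feeding $n = \bigo{\nd}$ and $\alpha = \bigo{1}$ into the running time $f(n,\alpha)\,L^{\bigo{1}}$ of \cref{prop:iqp} collapses $f(n,\alpha)$ to a computable function $g(\nd)$ of the neighborhood diversity alone, so the total running time is $g(\nd)\cdot\abs{V(G)}^{\bigo{1}}$, as claimed.

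The only subtlety to check is the feasibility/boundedness guarantee of \cref{prop:iqp}, which returns an optimal solution only when the program is feasible and not unbounded. Here the feasible region is a bounded integer box intersected with one linear equality, so the program is never unbounded; and \cref{lem:ehs-nd-reduction} already shows that feasibility of the \PrbIQP instance coincides with feasibility of the \PrbMaxEHS instance (i.e. $k \leq \abs{V(G)}$). Thus the returned optimum $x$ maximizes $h(x) = \frac{1}{2} f(x)$, and recovering the happy set by selecting $x_i$ arbitrary vertices from each module $M_i$ (again via \cref{lem:ehs-nd-reduction}) yields a maximum edge happy set. I do not expect a genuine obstacle here—the work is entirely in verifying that $\alpha$ and $L$ have the right dependence so that the prefactor $f(n,\alpha)$ becomes a function of $\nd$ only.
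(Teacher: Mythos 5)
Your proposal is correct and follows exactly the paper's route: the paper proves this theorem in one line as ``a direct result'' of \cref{lem:ehs-nd-reduction} and \cref{prop:iqp} (via \cref{col:iqp} for the linear term), and you have simply made explicit the bookkeeping---$\alpha=\bigo{1}$, $L=\abs{V(G)}^{\bigo{1}}$, boundedness of the feasible region---that the paper leaves implicit. No gaps; your write-up is if anything more careful than the original.
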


\subsection{Parameterized by cluster deletion number}

Finally, we present an FPT algorithm for \PrbMaxEHS parameterized
 by the cluster deletion number of the given graph.

\begin{theorem}\label{thm:ehs-cd}
    Given a graph $G=(V,E)$ and its cluster deletion set $X$ of size $\cd$,
    \PrbMaxEHS can be solved in time $\bigo{2^{\cd}\cdot k^2 \cdot \abs{V}}$.
\end{theorem}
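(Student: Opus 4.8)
The plan is to enumerate the trace of the solution on the cluster deletion set and, for each choice, reduce the residual problem to \PrbFKnapsack. Write $X$ for the given cluster deletion set, so that $G[V \setminus X]$ is a disjoint union of cliques $C_1,\ldots,C_p$ with $p \le |V|$. Any candidate happy set $S$ is determined by its trace $S_X = S \cap X$ together with the vertex subsets $T_\ell = S \cap C_\ell$ it selects from the cliques. Since $G[V\setminus X]$ is a cluster graph there are no edges between distinct cliques, and since each $C_\ell$ is a clique every pair of selected vertices inside it is an edge. Hence the number of happy edges splits cleanly as
\begin{equation*}
|E(G[S])| = |E(G[S_X])| + \sum_{\ell=1}^{p}\left(\binom{|T_\ell|}{2} + \sum_{v \in T_\ell} |N(v) \cap S_X|\right),
\end{equation*}
where the first term depends only on $S_X$, the binomial term counts the internal clique edges, and the last sum counts the edges from a clique to $S_X$.

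First I would fix $S_X$ and observe that the clique contributions decouple. For a fixed number $x_\ell = |T_\ell|$ of vertices taken from $C_\ell$, the internal term $\binom{x_\ell}{2}$ is determined by the count alone, while the cross term is maximized by greedily taking the $x_\ell$ vertices of $C_\ell$ with the most neighbours in $S_X$. Writing $d_{S_X}(v) = |N(v)\cap S_X|$, I would therefore define, for each clique and each $0 \le x \le |C_\ell|$,
\begin{equation*}
g_\ell(x) = \binom{x}{2} + \max_{T \subseteq C_\ell,\, |T| = x} \sum_{v \in T} d_{S_X}(v),
\end{equation*}
computed by sorting $C_\ell$ in non-increasing order of $d_{S_X}$ and taking prefix sums. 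Because the cliques are pairwise non-adjacent and each clique's cross edges reach only the fixed set $S_X$, the contributions are independent given $S_X$, so maximizing the total over selections of size $k$ is exactly the task of choosing counts $x_\ell \in \{0,\dots,|C_\ell|\}$ with $\sum_\ell x_\ell = k - |S_X|$ maximizing $\sum_\ell g_\ell(x_\ell)$. This is an \PrbFKnapsack instance with one item per clique, capacity $k-|S_X|$, and value functions $g_\ell$, which \cref{lem:f-knapsack} solves; adding the constant $|E(G[S_X])|$ yields the best happy set consistent with $S_X$.

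The outer loop then ranges over all $2^{\cd}$ subsets $S_X \subseteq X$ and returns the maximum; correctness holds because the true optimum has some trace $S_X$ and is recovered in that iteration. For the running time I would enumerate the subsets in Gray-code order so that $d_{S_X}(\cdot)$ and $|E(G[S_X])|$ are maintained incrementally, updating only the neighbourhood of the toggled vertex; together with a bucket sort on the bounded values $d_{S_X}(v) \le \cd$, this keeps the per-subset bookkeeping and the recomputation of the $g_\ell$ within $\bigo{|V|}$. The dominant cost per subset is the \PrbFKnapsack solve: with at most $|V|$ cliques and capacity at most $k$, \cref{lem:f-knapsack} runs in $\bigo{|V|\cdot k^2}$, so the total is $\bigo{2^{\cd}\cdot k^2 \cdot |V|}$.

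The main obstacle I anticipate is not the decomposition itself but justifying the separability carefully: that fixing only the \emph{count} per clique, and then choosing its vertices greedily, loses no optimal solution. This rests on the cluster structure guaranteeing that internal edges depend solely on the count and that cross edges never leave $S_X$. A secondary point is ensuring all per-subset overhead (the incremental degree maintenance and clique sorting) stays below the knapsack cost, so that the clean $\bigo{2^{\cd}\cdot k^2 \cdot |V|}$ bound is genuinely attained rather than dominated by preprocessing.
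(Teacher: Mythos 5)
Your proposal is correct and follows essentially the same route as the paper: enumerate the trace $S'=S\cap X$ over all $2^{\cd}$ subsets, observe that the cliques of $G[V\setminus X]$ contribute independently given $S'$, select vertices within each clique greedily by number of neighbors in $S'$, and solve the resulting \PrbFKnapsack instance with capacity $k-\abs{S'}$ (your closed-form $g_\ell(x)=\binom{x}{2}+\max_{|T|=x}\sum_{v\in T}d_{S'}(v)$ is exactly the paper's recursively defined $f_i$). The only cosmetic difference is your Gray-code/bucket-sort bookkeeping, which is unnecessary since the paper's $\bigo{k\cdot\abs{V}}$ per-subset sorting cost is already dominated by the $\bigo{k^2\cdot\abs{V}}$ knapsack solve.
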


Recall that by definition, $G[V \setminus X]$ is a set of disjoint cliques.
Let $C_1, \ldots, C_p$ be the clusters appeared in $G[V \setminus X]$.
Our algorithm first guesses part of the happy set $S'$, defined as $S \cap X$,
 and performs \PrbFKnapsack with $p$ items.

\begin{alg}[\AlgMaxEHSCD]
    \label{alg:max-ehs-cd}
    Given a graph $G=(V,E)$ and its cluster deletion set $X$,
     consider all sets of $S' \subseteq X$ such that $\abs{S'} \leq k$
     and proceed the following steps.

    (Step 1) For each clique $C_i$, sort its vertices in non-increasing order of the number of neighbors in $S'$.
    Let $v_{i,1},\ldots,v_{i,\abs{C_i}}$ be the ordered vertices in $C_i$.
    (Step 2) For each $1 \leq i \leq p$, construct a function $f_i$ as follows:
    $f_i(0) = 0$ and for every $1 \leq j \leq \abs{C_i}$,
     $f_i(j) = f_i(j-1) + \abs{N(v_{i,j}) \cap S'} + j - 1$.
    (Step 3) Formulate an \PrbFKnapsack instance with capacity $k - \abs{S'}$
    and value functions $f_i$ for every $1 \leq i \leq p$.
    Then, obtain the solution $\{x_i\}$ with the exact capacity $k - \abs{S'}$ if feasible.
    (Step 4) Construct $S$ as follows:
    Initialize with $S'$ and for each clique $C_i$, pick $x_i$ vertices in order and include them in $S$.
    That is, update $S \gets S \cup \{v_{i,1},\ldots,v_{i,x_i}\}$ for every $1\leq i \leq p$.
    Finally, return $S$ that maximizes $\abs{E(G[S])}$.
\end{alg}

Intuitively, we construct function $f_i$ in a greedy manner.
When we add a vertex $v$ in clique $C_i$ to the happy set $S$,
 it will increase the number of happy edges by
 the number of $v$'s neighbors in $S'$ and
 the number of the vertices in $C_i$ that are already included in $S$.
Therefore, it is always advantageous to pick a vertex having the most neighbors in $S'$.
\cref{fig:alg-ehs-cd} illustrates the key ideas of \cref{alg:max-ehs-cd}.
The following proposition completes the proof of \cref{thm:ehs-cd}.

\begin{figure}[t]
    \centering
    \pgfdeclarelayer{bg}
    \pgfsetlayers{bg, main}
    \usetikzlibrary{calc}

    \tikzset{
        old inner xsep/.estore in=\oldinnerxsep,
        old inner ysep/.estore in=\oldinnerysep,
        double circle/.style 2 args={
            circle,
            old inner xsep=\pgfkeysvalueof{/pgf/inner xsep},
            old inner ysep=\pgfkeysvalueof{/pgf/inner ysep},
            /pgf/inner xsep=\oldinnerxsep+#1,
            /pgf/inner ysep=\oldinnerysep+#1,
            alias=sourcenode,
            append after command={
            let     \p1 = (sourcenode.center),
                    \p2 = (sourcenode.east),
                    \n1 = {\x2-\x1-#1-0.5*\pgflinewidth}
            in
                node [inner sep=0pt, draw, circle, minimum width=2*\n1,at=(\p1),#2] {}
            }
        },
        double circle/.default={-3pt}{black}
    }
    \tikzstyle{small} = [circle, fill=white, text=black, draw, thick, scale=1, minimum size=0.2cm, inner sep=1.5pt]
    \tikzstyle{happy} = [circle, fill=black!25, text=black, draw, thick, scale=1, minimum size=0.2cm, inner sep=1.5pt]
    \tikzstyle{large} = [circle, fill=white, text=black, draw, thick, scale=1, minimum size=1.2cm, inner sep=1.5pt]
    \begin{minipage}{0.96\textwidth}
        \centering
        \footnotesize
        \begin{tikzpicture}
            \node[happy] (x1) at (3, 2) {};
            \node[happy] (x2) at (4, 2) {};
            \node[happy] (x3) at (5, 2) {};
            \node[small] (x4) at (6, 2) {};
            \node[small] (x5) at (7, 2) {};
            \node[small] (x6) at (8, 2) {};
            
            \node[small, label={[shift={(0,-1.0)}]$v_{1,1}$}] (v1) at  (3, 0.3) {};
            \node[small, label={[shift={(0,-1.0)}]$v_{1,2}$}] (v2) at  (4, 0.3) {};
            \node[small, label={[shift={(0,-1.0)}]$v_{1,3}$}] (v3) at  (5, 0.3) {};
            \node[small, label={[shift={(0,-1.0)}]$v_{1,4}$}] (v4) at  (6, 0.3) {};
            \node[small, label={[shift={(0,-0.7)}]$v_{2,1}$}] (u1) at  (7, 0.3) {};
            \node[small, label={[shift={(0,-0.7)}]$v_{2,2}$}] (u2) at  (8, 0.3) {};

            \draw[very thick] (x1) -- (v1);
            \draw[very thick] (x2) -- (v1);
            \draw[very thick] (x3) -- (v1);

            \draw[very thick] (x2) -- (v2);
            \draw[very thick] (x3) -- (v2);
            \draw (x4) -- (v2);
            
            \draw[very thick] (x3) -- (v3);
            
            \draw (x1) -- (x2);
            \draw (x3) -- (x4);
            \draw (x5) -- (x6);

            \draw (x4) -- (v4);
            \draw (x5) -- (v4);
            
            \draw (v1) -- (v2);
            \draw (v2) -- (v3);
            \draw (v3) -- (v4);
            \draw (u1) -- (u2);
            \draw (v1) to[in=205, out=-25] (v3);
            \draw (v2) to[in=205, out=-25] (v4);
            \draw (v1) to[in=210, out=-30] (v4);
            
            \draw[very thick] (x2) -- (u1);
            \draw[very thick] (x3) -- (u1);
            \draw (x6) -- (u2);

            \node at (1.9, 2.6) {$X$};
            \node at (2.8, 2.55) {$S'$};
            \node at (2.3, 0.2) {$C_1$};
            \node at (8.7, 0.2) {$C_2$};
            \draw[rounded corners, dashed] (2.5, 1.7) rectangle ++ (3, 0.6);
            \draw[rounded corners] (2.2, 1.5) rectangle ++ (6.4, 1.4);
            \draw[rounded corners] (2.6, -0.7) rectangle ++ (3.8, 1.3);
            \draw[rounded corners] (6.6, -0.7) rectangle ++ (1.8, 1.3);

            \node at (1.2, 0.5) {
                $\begin{aligned}
                    f_1(0) &= 0\\
                    f_1(1) &= 3\\
                    f_1(2) &= 6\\
                    f_1(3) &= 9\\
                    f_1(4) &= 12
                \end{aligned}$
            };

            \node at (9.8, 0) {
                $\begin{aligned}
                    f_2(0) &= 0\\
                    f_2(1) &= 2\\
                    f_2(2) &= 3
                \end{aligned}$
            };
        \end{tikzpicture}
       \end{minipage}%

    \caption{%
    Visualization of \AlgMaxEHSCD,
    given a graph $G=(V,E)$ with its cluster deletion set $X$ and a fixed partial solution
    $S'$ ($\abs{S'}=3$, shaded in gray).
    The graph after removing $X$, i.e. $G[V \setminus X]$, forms cliques $C_1$ and $C_2$.
    For each clique, vertices are sorted in decreasing order of the number of neighbors in $S'$
    (edges to $S'$ in thicker lines).
    Functions $f_1$ and $f_2$ are constructed as described in the algorithm and used for \PrbFKnapsack.
    For example, $f_1(3) = f_1(2) + 1 + (3-1)$ as vertex $v_{1,3}$ has one edge to $S'$ and
    two edges to previously-added $v_{1,1}$ and $v_{1,2}$.
    If $k=6$, then we pick $k-\abs{S'}=3$ vertices from $C_1$ and $C_2$.
    The optimal solution would be $\{v_{1,1}, v_{1,2}, v_{1,3}\}$
    because $f_1(3) + f_2(0) = 9$ gives the maximum objective value in the \PrbFKnapsack formulation.
    }
    \label{fig:alg-ehs-cd}
\end{figure}

\begin{proposition}
    Given a graph $G=(V,E)$ and its cluster deletion set $X$ of size $\cd$,
    \cref{alg:max-ehs-cd} correctly finds the maximum edge happy set in time $\bigo{2^{\cd} \cdot k^2 \cdot \abs{V}}$.
\end{proposition}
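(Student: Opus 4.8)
The plan is to prove correctness first by fixing an optimal happy set $S^\ast$ and showing the algorithm reconstructs its value, then to verify the running time. Its intersection $S^\ast \cap X$ has size at most $k$ (as $S^\ast \cap X \subseteq S^\ast$ and $\abs{S^\ast}=k$), hence is among the subsets $S'$ enumerated in the outer loop. For any fixed choice of $S' = S \cap X$, I would decompose the happy edges of $G[S]$ into three groups: edges with both endpoints in $S'$, edges between $S'$ and the chosen clique-vertices, and edges internal to a single cluster $C_i$. The first group contributes the constant $\abs{E(G[S'])}$, independent of the clique choices. The crucial structural point is that, because $G[V \setminus X]$ is a disjoint union of cliques, no edge joins two distinct clusters, so the clique choices decouple across $C_1, \ldots, C_p$; and since each $C_i$ is a clique, selecting any $x_i$ of its vertices yields exactly $\binom{x_i}{2}$ internal edges, independent of \emph{which} $x_i$ vertices are taken.

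Second, I would establish the greedy step. For a fixed count $x_i$, the internal contribution $\binom{x_i}{2}$ is constant, so the only freedom is which $x_i$ vertices to select, and they contribute $\sum_{v\text{ chosen}} \abs{N(v) \cap S'}$ edges toward $S'$; this sum is maximized by taking the $x_i$ vertices with the most neighbors in $S'$, which is precisely the sorted order $v_{i,1}, v_{i,2}, \ldots$ of Step~1. A short exchange argument confirms optimality: replacing a chosen vertex by an unchosen one with at least as many neighbors in $S'$ never decreases the objective. Thus the best contribution of $C_i$ when exactly $x_i$ vertices are selected equals $\binom{x_i}{2} + \sum_{j=1}^{x_i} \abs{N(v_{i,j}) \cap S'}$, and I would verify by induction on $j$ that the recurrence $f_i(j) = f_i(j-1) + \abs{N(v_{i,j}) \cap S'} + (j-1)$ of Step~2 computes exactly this quantity, since the $(j-1)$ term records the edges from $v_{i,j}$ to the $j-1$ previously selected clique-mates and $\sum_{j=1}^{x_i}(j-1) = \binom{x_i}{2}$.

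Third, maximizing $\sum_i f_i(x_i)$ subject to $\sum_i x_i = k - \abs{S'}$ with $0 \le x_i \le \abs{C_i}$ is exactly an \PrbFKnapsack instance with capacity $k - \abs{S'}$, which Step~3 solves optimally (and detects infeasibility, i.e.\ when fewer than $k - \abs{S'}$ clique-vertices are available) by \cref{lem:f-knapsack}. Adding back the constant $\abs{E(G[S'])}$, the value associated with $S'$ equals the maximum number of happy edges over all $S$ with $S \cap X = S'$; taking the best over all enumerated $S'$ therefore yields the global optimum, establishing correctness. For the running time, there are at most $2^{\cd}$ subsets $S'$; for each, computing $\abs{N(v) \cap S'}$ for every clique-vertex and sorting within clusters is $\bigo{\abs{V}}$-type work (counting sort suffices, as keys are bounded by $\cd$), and Step~3 costs $\bigo{p \cdot k^2} = \bigo{\abs{V} \cdot k^2}$ by \cref{cor:f-knapsack} since $p \le \abs{V}$. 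Summing over all $S'$ gives the claimed $\bigo{2^{\cd} \cdot k^2 \cdot \abs{V}}$.

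I expect the main obstacle to be making the decoupling and greedy argument fully rigorous: namely, verifying that once $S'$ is fixed the per-clique optimizations are genuinely independent (no residual interaction through $X$), and that the single greedy-by-neighbor-count ordering is simultaneously optimal for \emph{every} feasible value of $x_i$, so that one sorted order per clique is all the knapsack needs. The remaining bookkeeping---the inductive verification of the recurrence for $f_i$ and the running-time accounting---is routine.
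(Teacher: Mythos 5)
Your proposal is correct and follows essentially the same route as the paper's proof: enumerate $S'=S\cap X$, decompose the gained edges per clique as $\binom{x_i}{2}+\sum_{v}\abs{N(v)\cap S'}$, argue that sorting by neighbor count in $S'$ is optimal for each fixed $x_i$, and combine the cliques via \PrbFKnapsack. Your write-up is somewhat more explicit than the paper's (the exchange argument, the observation that one sorted order is simultaneously optimal for all prefix lengths $x_i$, and the decoupling across clusters), but these are elaborations of the same argument rather than a different approach.
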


\begin{proof}
    The algorithm considers all possible sets of $S \cap X$, so the optimal solution should extend
    one of them.
    It is clear to see that when the \PrbFKnapsack instance is feasible,
     $S$ ends up with $k$ vertices, since the sum of the obtained solution must be $k-\abs{S'}$.
    The objective of the \PrbFKnapsack is equivalent to $\abs{E(G[S])} - \abs{E(G[S'])}$,
    that is, the number of happy edges extended by the vertices in $V \setminus X$.
    Since $S'$ has been fixed at this point, the optimal solution to \PrbFKnapsack
    leads to that to \PrbMaxEHS.
    Lastly, the value function $f_i$ is correct because for each clique $C_i$,
    the number of extended edges is given by $\binom{\abs{S_i}}{2} + \sum_{v \in S_i} \abs{N(v) \cap S'}$,
    where $S_i = S \cap C_i$ and $x_i = \abs{S_i}$.
    This is maximized by choosing $\abs{S_i}$ vertices that have the most neighbors in $S'$,
    if we fix $\abs{S_i}$, represented as $x_i$ in \PrbFKnapsack.
    This is algebraically consistent with the recursive form in step 2.

    For the running time, the choice of $S'$ adds the complexity of $2^{\cd}$ to the entire algorithm.
    Having chosen $S'$, vertex sorting (step 1) can be accomplished by
    checking the edges between $S'$ and $V \setminus X$, so it takes only $\bigo{k\cdot |V|}$.
    The \PrbFKnapsack (step 3) takes time $\bigo{pk^2} = \bigo{k^2\cdot |V|}$ from \cref{cor:f-knapsack},
    because there are $p$ items and weights are bounded by $k$.
    Steps 2 and 4 do not exceed this asymptotic running time.
    The total runtime is $\bigo{2^{\cd} \cdot k^2 \cdot \abs{V}}$.\looseness-1
\end{proof}

\section{Conclusions \& Future Work}\label{sec:conclusion}

We present four algorithms using a variety of techniques,
 two for \PrbMaxHSLong (\PrbMaxHS) and two for \PrbMaxEHSLong (\PrbMaxEHS).
The first shows that \PrbMaxHS is FPT with respect to
 the modular-width parameter, which is stronger than clique-with but generalizes several parameters
 such as neighborhood diversity and twin cover number.
We then give an FPT dynamic-programming algorithm for \PrbMaxHS parameterized by clique-width.
This improves the best known complexity result of FPT when parameterized by
 clique-width plus $k$.

For \PrbMaxEHS, we prove that it is FPT by neighborhood diversity, using \PrbIQPLong.
Lastly, we show an FPT algorithm parameterized by cluster deletion number,
the distance to a cluster graph, which then implies FPT by twin cover number.
These results have resolved several open questions of \cite{asahiro2021parameterized} (\cref{tab:result}).

There are multiple potential directions for future research.
As highlighted in \cref{tab:result}, the parameterized complexity of \PrbMaxEHS
 with respect to modular-width is still open.
Another direction would be to find the lower bounds for known algorithms.

\bibliography{main}


\end{document}